\newcommand{\bcomment}[1]{}
\newcommand{\TODO}[1]{}
\newcommand{\pth}[1]{\left(#1\right)}
\newcommand{\R}{\mathbb{R}}
\newcommand{\N}{\mathbb{N}}
\newcommand{\E}{\operatorname{\mathbb{E}}}
\newcommand{\twobytwo}[4]{\begin{pmatrix} #1 & #2 \\ #3 & #4 \end{pmatrix}}
\newcommand{\bfp}{\mathbf{p}}
\newcommand{\bfG}{\mathbf{G}}
\newcommand{\bfM}{\mathbf{M}}
\newcommand{\bfS}{\mathbf{S}}
\newcommand{\bfX}{\mathbf{X}}
\newcommand{\bfY}{\mathbf{Y}}
\newcommand{\bfmu}{\bm{\mu}}
\newcommand{\mcM}{\mathcal{M}}
\newcommand{\mcU}{\mathcal{U}}
\newcommand{\mcV}{\mathcal{V}}
\newcommand{\mcX}{\mathcal{X}}
\newcommand{\mcY}{\mathcal{Y}}
\newcommand{\eq}{\mathrel{\phantom{=}}}
\newcommand{\tr}{\operatorname{tr}}
\newcommand{\multisetds}[2]{\bigg(\kern-.4em\binom{#1}{#2}\kern-.4em\bigg)}
\newcommand{\multisetin}[2]{\big(\kern-.3em\binom{#1}{#2}\kern-.3em\big)}
\newcommand{\multisetix}[2]{\left(\kern-.2em\binom{#1}{#2}\kern-.2em\right)}
\newcommand{\eql}[1]{\overset{(#1)}{=}}
\newcommand{\leql}[1]{\overset{(#1)}{\leq}}
\DeclareMathOperator*{\argmin}{argmin}
\DeclareMathOperator*{\argmax}{argmax}
\newtheorem{theorem}{Theorem}
\newtheorem{lemma}{Lemma}
\newtheorem{definition}{Definition}
\newcommand{\Aut}{\operatorname{Aut}}
\newcommand{\id}{\operatorname{id}}
\newcommand{\ER}{\operatorname{ER}}
\newcommand{\ErRe}{Erd\H{o}s-R\'{e}nyi }
\newcommand{\pmar}{p_{1*}p_{*1}}
\newcommand{\G}[1]{G_{#1}}%{G_a \cap_{#1} G_b}
\newcommand{\gfM}{A_{\mu,\mu^*}}
\newcommand{\ttl}{\tilde{t}}%{\widetilde{t}}
\newcommand{\onevec}{\vec{1}}
\newcommand{\indset}{\mathbbm{1}}
\begin{document}
\title{Partial Recovery of Erd\H{o}s-R\'{e}nyi Graph Alignment via $k$-Core Alignment}
%\author{}
\author[1]{Daniel Cullina\thanks{dcullina@princeton.edu}}
\author[2]{Negar Kiyavash\thanks{negar.kiyavash@ece.gatech.edu}}
\author[1]{Prateek Mittal\thanks{pmittal@princeton.edu}}
\author[1]{H. Vincent Poor\thanks{poor@princeton.edu}}
\affil[1]{Princeton University, Department of Electrical Engineering}
\affil[2]{Georgia Tech, Department of Electrical and Computer Engineering and Department of Industrial \& Systems Engineering}
\date{}
\maketitle
\begin{abstract}
  We determine information theoretic conditions under which it is possible to \emph{partially} recover the alignment used to generate a pair of sparse, correlated \ErRe graphs.
  To prove our achievability result, we introduce the $k$-core alignment estimator.
  This estimator searches for an alignment in which the intersection of the correlated graphs using this alignment has a minimum degree of $k$.
  We prove a matching converse bound.
  As the number of vertices grows, recovery of the alignment for a fraction of the vertices tending to one is possible when the average degree of the intersection of the graph pair tends to infinity.
  It was previously known that exact alignment is possible when this average degree grows faster than the logarithm of the number of vertices.
\end{abstract}
%\newpage

Graph alignment, or graph matching, is the problem of finding a correspondence between the vertex sets of a pair of graphs using structural information from the graphs.
It can be thought of as the noisy generalization of the graph isomorphism problem.
Graph matching has applications in the privacy of social network data, the analysis of biological protein interaction networks, and in computer vision.

We consider the graph matching problem for random graphs that have been generated in a correlated way, so there is a planted ground-truth alignment of their vertices.
In this setting, the combinatorial optimization problem of maximizing edge overlap is also the maximum a posteriori estimator.

\subsection{Related work}
A number of authors have worked to determine the information theoretic or statistical conditions under which graph alignments can be recovered by any algorithm.
Wright determined the conditions under which an \ErRe graph has a trivial automorphism group, or equivalently under which the isomorphism recovery problem has a unique solution \cite{wright_graphs_1971}.
Pedarsani and Grossglauser obtained achievability conditions for exact recovery in the noisy case \cite{pedarsani_privacy_2011}.
Cullina and Kiyavash obtained matching achievability and converse conditions for exact recovery \cite{cullina_exact_2017,cullina_improved_2016}.
Kazemi, Yartseva, and Grossglauser considered alignment of graphs with overlapping but not identical vertex sets \cite{kazemi_when_2015}.
Shirani, Garg, and Erkip found an achievability condition for partial recovery with a small number of errors was obtained \cite{shirani_typicality_2018}.
In all of these cases, the explicit or implicit algorithms require exponential time in the number of vertices.
Cullina, Mittal, and Kiyavash obtained analogous limits for the alignment recovery problem for correlated databases \cite{cullina_fundamental_2018}.
In this case, maximum a posteriori estimation can be done efficiently.

A number of practically motivated and efficient algorithms have been proposed
\cite{ji_your_2015,lee_blind_2017,backstrom_wherefore_2007,korula_efficient_2014,malod-dognin_l-graal:_2015,kuchaiev_topological_2010,singh_global_2008,tian_tale:_2008,zhang_sapper:_2010}.
These have largely been empirically evaluated on a mix of real and synthetic datasets.
It is common for these algorithms to return partial matchings of the vertex sets for some graph pairs.

A few efficient algorithms that require some form of initial side information have been rigorously analyzed.
Yartseva and Grossglauser used a percolation algorithm to obtain a graph alignment starting with some matched pairs of seed vertices \cite{yartseva_performance_2013}.
A number of other works have investigated seeded matching \cite{kazemi_growing_2015,pedarsani_bayesian_2013,lyzinski_seeded_2014}.
Feizi et al. used a spectral method to recover an alignment of dense graphs with side information restricting the set of possible alignments \cite{feizi_spectral_2016}.
Lyzinski et al. explored the limitations of some convex programming methods, which have presented a barrier to the development of efficient algorithms \cite{lyzinski_graph_2016}.

Very recently, provably correct quasi-polynomial time algorithms have been obtained.
Barak, Chou, Lei, Schramm, and Sheng search for appearance of particular polylogarithmic-sized subgraphs in both graphs \cite{barak_nearly_2018}.
Mossel and Xu use seeds more efficiently than previous algorithms, creating a signature vertex based on the set of seeds in a large neighborhood of the vertex.
The number of seed pairs required is small enough that they can be guessed, yielding an algorithm that does not require side information \cite{mossel_seeded_2018}.

We intentionally use the terminology ``planted alignment'' in analogy with ``planted clique'', ``planted dense subgraph'', ``planted coloring'', and ``planted partition''.
For these settings, there are several basic problems.
One is to find the statistical or information theoretic limits of exact recovery, i.e. the conditions under which an algorithm with unlimited resources can with high probability recover the hidden structure with no errors.
Another is to find the information theoretic limits of detection, i.e. the conditions under which an object with a planted structure can be distinguished from an object without one.
Finally, there are the conditions under which efficient algorithm can succeed at these tasks.
There is a large body of work using spectral algorithms, message passing algorithms, and semidefinite optimization to efficiently recover planted structures.
See the surveys of Moore \cite{moore_computer_2017}, Abbe \cite{abbe_community_2017}, and Wu and Xu \cite{wu_statistical_2018} for an overview.

In the case of recovering a planted alignment, finding the information-theoretic limits of exact recovery, often the easiest of the standard problems to resolve, is already challenging.
In this paper, we investigate the information-theoretic limits of a problem in between exact recovery and detection: recovery of almost all of a planted alignment with one-sided error.

\section{Model}
\subsection{Notation}
A binary relation $\mu \subseteq \mcU_a \times \mcU_b$ is a matching if each $i \in \mcU_a$ and $j \in \mcU_b$ appears in at most one pair in $\mu$.
%For a matching $\mu \subseteq \mcU_a \times \mcU_b$,
Define the functions $\alpha : 2^{\mcU_a \times \mcU_b} \to 2^{\mcU_a}$ and $\beta : 2^{\mcU_a \times \mcU_b} \to 2^{\mcU_b}$ that find the left and right support of a binary relation: 
\begin{align*}  
  \alpha(\mu) &= \{i \in \mcU_a : \exists j \in \mcU_b \, . \,  (i,j) \in \mu\}\\
  \beta(\mu) &= \{j \in \mcU_b : \exists i \in \mcU_a \, . \,  (i,j) \in \mu\}.
\end{align*}
%That is, $\alpha$ and $\beta$ give the sets of elements that appear somewhere in $\mu$. 
A matching $\mu \subseteq \mcU_a \times \mcU_b$ is a bijection if $\alpha(\mu) = \mcU_a$ and $\beta(\mu) = \mcU_b$.

Let $\wedge$ be the minimum or ``and'' binary operator on $\{0,1\}$.
Let $[n]$ denote the set $\{0,\cdots,n-1\}$.
For a set $\mcU$, let $\binom{\mcU}{2}$ be the set of unordered pairs of elements of $\mcU$. 
Represent a labeled graph on a vertex set $\mcU$ by its edge indicator function $G : \binom{\mcU}{2} \to [2]$.
For a graph $G$, let $V(G)$ and $E(G)$ be the vertex and edge sets respectively. 
%The edge set of a graph $G$ is $E(G) \subseteq \binom{V(G)}{2}$.
%We use standard notation to express bounds on the asymptotic behavior a function: $o(f(n))$, $\mathcal{O}(f(n))$, $\Omega(f(n))$, and $\omega(f(n))$.
Throughout, we use boldface letters for random objects and lightface letters for fixed objects.

\subsection{Correlated graphs}
The correlated \ErRe graph model has been used in much of the work on alignment recovery for random graphs, beginning with Pedarsani and Grossglauser \cite{pedarsani_privacy_2011}.
The idea is simple: we have two graphs $G_a$ and $G_b$ whose marginal distributions are \ErRe.
Under the true vertex matching, each edge random variable in $G_a$ is aligned with some edge random variable in $G_b$.
These aligned pairs of edge random variables have some joint distribution and this is the only source of correlation between the graphs.

To formalize this, we need the following definition.
\begin{definition}[Lifted matching]
A matching $\mu \subseteq \mcU_a \times \mcU_b$ gives rise to a lifted matching $\ell(\mu) \subseteq \binom{\mcU_a}{2} \times \binom{\mcU_b}{2}$,
\[
  \ell(\mu) = \left\{(\alpha(w),\beta(w)) : w \in \binom{\mu}{2}\right\} = \left\{(\{u_a,v_a\},\{u_b,v_b\}) : \{(u_a,u_b),(v_a,v_b)\} \in \binom{\mu}{2}\right\}.
\]
\end{definition}

\begin{definition}
  The distribution of random variables $(\bfX_a,\bfX_b) \in \{0,1\}^2$ is fully specified by a matrix of parameters $p \in \R^{\{0,1\} \times \{0,1\}}$, where
  $P_{\bfX_a,\bfX_b}(i,j) = p_{ij}$.
  In this case, we say that $X_a$ and $X_b$ have a correlated Bernoulli distribution with parameter $p$.

  For a matching $\bfmu \subseteq \mcU_a \times \mcU_b$, we define the correlated \ErRe distribution on pairs of graphs $\bfG_a : \binom{\mcU_a}{2} \to \{0,1\}$ and $\bfG_b: \binom{\mcU_b}{2} \to \{0,1\}$, denoted $\ER(\bfmu,p)$, as follows.
  For each $(w_a,w_b) \in \ell(\bfmu)$, $(\bfG_a(w_a), \bfG_b(w_b))$ have a correlated Bernoulli distribution with parameter $p$ and these random variables are mutually independent.
  That is,
  \[
    P_{\bfG_a,\bfG_b|\bfmu}(G_a,G_b|\mu) = \prod_{(w_a,w_b) \in \ell(\mu)} P_{\bfX_a,\bfX_b}(G_a(w_a),G_b(w_b)).
  \]
\end{definition}

Because $l(\mu)$ is a matching, each $w_a \in \binom{\mcU_a}{2}$ appears in exactly one pair $(w_a,w_b) \in l(\mu)$.
%Equivalently, we can interpret l(\mu) as a bijective function and find w_b = l(\mu)(w_a).
For a pair $(w_a,w_b) \not\in \ell(\mu)$, $G_a(w_a)$ is independent of $G_b(w_b)$.

If $p_{11}p_{00} > p_{10}p_{01}$, then these distributions have \emph{positive correlation}.
We will only consider positively correlated graphs in this paper.

\subsection{Estimating a planted alignment}
\label{section:estimation}
We consider the following estimation problem.
Let $|\mcU_a| = |\mcU_b| = n$ and let $\bfmu$ be a uniformly random bijection between $\mcU_a$ and $\mcU_b$.
Let $(\bfG_a,\bfG_b) \sim \ER(\bfmu,p)$.

The most stringent recovery requirement, $\hat{\bfmu} = \bfmu$ or exact recovery, was addressed by Cullina and Kiyavash \cite{cullina_exact_2017}.
Their precise results are discussed in Section~\ref{section:results}.
In that case, there is a clear definition of the optimal estimator: the maximum a posteriori (MAP) estimator: $\hat{\bfmu}(G_a,G_b) = \argmax_{\hat{\mu}} \Pr[\bfmu = \hat{\mu}|\bfG_a = G_a,\bfG_b = G_b]$.
Because $\bfmu$ has a uniform distribution, by Bayes theorem $\hat{\bfmu}(G_a,G_b) = \argmax_{\hat{\mu}} \Pr[\bfG_a = G_a,\bfG_b = G_b|\bfmu = \hat{\mu}]$.

This estimator is closely related to the \emph{aligned intersection} of a pair of graphs.
Let $G_a$ and $G_b$ be graphs and let $\mu$ be a matching between their vertex sets.
Then $\mu$ provides an alignment of the subgraphs $G_a[\alpha(\mu)]$ and $G_b[\beta(\mu)]$.
Using this alignment, we can compute the intersection of these two subgraphs.
The natural vertex set for this intersection graph is $\mu$.
We formalize this construction as follows.
\begin{definition}
  Let $G_a$ and $G_b$ be graphs and let $\mu \subseteq V(G_a) \times V(G_b)$ be a matching.
  Define $G_a \wedge_\mu G_b$, the aligned intersection of $G_a$ and $G_b$, to be the graph on the vertex set $\mu$ such that
  \begin{align*}
%    E(G_a \cap_\mu G_b) &= \left\{ \{(u_a,u_b),(v_a,v_b)\} \in \binom{\mu}{2} : \{u_a,v_a\} \in E(G_a) \wedge \{u_b,v_b\} \in E(G_b) \right\}\\
    (G_a \wedge_\mu G_b)&: \binom{\mu}{2} \to \{0,1\}\\
    (G_a \wedge_\mu G_b)(\{(u_a,u_b),(v_a,v_b)\}) &= G_a(\{u_a,v_a\}) \wedge G_b(\{u_b,v_b\})
  \end{align*}
  or equivalently
  \[
    (G_a \wedge_\mu G_b)(w) = G_a(\alpha(w)) \wedge G_b(\beta(w)).
  \]  
\end{definition}
Cullina and Kiyavash~\cite{cullina_improved_2016} observed that for a bijection $\mu$,
\[
  \Pr[\bfG_a = G_a,\bfG_b = G_b|\bfmu = \mu] \propto \left(\frac{p_{11}p_{00}}{p_{10}p_{01}}\right)^{|E(G_a \wedge_\mu G_b)|}
\]
where the constant of proportionality depends on $G_a$ and $G_b$ but not on $\mu$.
Thus, in the case of positive correlation, the MAP estimator is $\hat{\bfmu}(G_a,G_b) = \argmax_{\hat{\mu}} |E(G_a \wedge_{\hat{\mu}} G_b)|$.

Herein we consider partial recovery of $\bfmu$ using $(\bfG_a,\bfG_b)$, which is interesting when exact recovery is impossible.
In particular, we would like to match some of the vertices of $\bfG_a$ to the corresponding vertices in $\bfG_b$ without any errors.
This means that we want an estimator $\hat{\bfmu}$ such that $\hat{\bfmu} \subseteq \bfmu$ and $|\hat{\bfmu}|$ is as large as possible.
We are interested in estimators that satisfy these conditions with probability $1-o(1)$.

For a partial matching $\mu'$,
\[
  \Pr[\bfmu \supseteq \mu'|\bfG_a = G_a,\bfG_b = G_b] = \sum_{\mu \supseteq \mu', |\mu| = n} \Pr[\bfmu = \mu'|\bfG_a = G_a,\bfG_b = G_b].
\]
There are two natural generalization of the MAP estimator for partial recovery.
The first fixes $n'$, the size of the estimated matching, and selects $\hat{\mu}$ that maximizes $\Pr[\bfmu \supseteq \hat{\mu}|\bfG_a = G_a,\bfG_b = G_b]$.
The second fixes $\epsilon$, an error budget, and selects a $\hat{\mu}$ satisfying $\Pr[\bfmu \supseteq \hat{\mu}|\bfG_a = G_a,\bfG_b = G_b] \geq 1-\epsilon$ and maximizing $|\hat{\mu}|$.
Neither of these are particularly straightforward to analyze, so we introduce the $k$-core alignment estimator.

%We will primarily have two graph random variables, $\bfG_a$ and $\bfG_b$, and for any matching $\mu$ we will abbreviated their aligned intersection $\bfG_a \wedge_{\mu} \bfG_b$ as $\bfG_{\mu}$.
\subsection{$k$-cores and $k$-core alignments}

\label{subsection:core}
Let $\delta(G)$ be the minimum degree of $G$ and for $S \subseteq V(G)$, let $G[S]$ be the subgraph of $G$ induced by $S$.
We adopt the convention that for the null graph, i.e. $G$ such that $V(G) = \varnothing$, $\delta(G) = \infty$.
Thus there is always some $S \subseteq V(G)$ such that $\delta(G[S]) \geq k$.
If $\delta(G[S]) \geq k$ and $\delta(G[S']) \geq k$, then $\delta(G[S \cup S']) \geq k$.
Thus there is a unique maximum among the sets that induce subgraphs with minimum degree at least $k$.
The subgraph induced by this set is the $k$-core of $G$ \cite{bollobas_evolution_1984}.
\footnote{When every nonempty induced subgraph of a graph $G$ has a minimum degree less than $k$, some authors say that the $k$-core does not exist. In this case the $k$-core of $G$ is the null graph under our convention.}

We introduce the following related notion.
\begin{definition}
  A $k$-core alignment of $G_a$ and $G_b$ is a matching $\mu \subseteq V(G_a) \times V(G_b)$ such that $\delta(G_a \wedge_{\mu} G_b) \geq k$ and for all matchings $\mu' \supset \mu$, $\delta(G_a \wedge_{\mu'} G_b) < k$.
%  Let $\mathcal{M}_k(G_a,G_b)$ be the set of $k$-core alignments.
\end{definition}

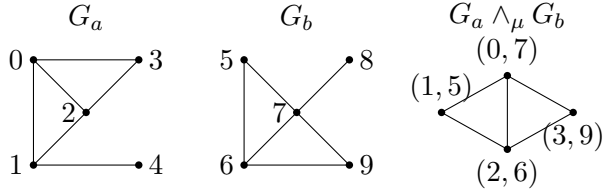
\begin{figure}
  \centering
  \begin{tikzpicture}[scale=0.7]
    \coordinate[label=left:{$0$}] (v0) at (0,5);
    \coordinate[label=left:{$1$}] (v1) at (0,3);
    \coordinate[label=left:{$2$}] (v2) at (1,4);
    \coordinate[label=right:{$3$}] (v3) at (2,5);
    \coordinate[label=right:{$4$}] (v4) at (2,3);

    \coordinate[label=left:{$5$}] (v5) at (4,5);
    \coordinate[label=left:{$6$}] (v6) at (4,3);
    \coordinate[label=left:{$7$}] (v7) at (5,4);
    \coordinate[label=right:{$8$}] (v8) at (6,5);
    \coordinate[label=right:{$9$}] (v9) at (6,3);

    \coordinate[label=above:{$(0,7)$}] (v07) at (9,4.7);
    \coordinate[label=above:{$(1,5)$}]  (v15) at (7.75,4);
    \coordinate[label=below:{$(2,6)$}] (v26) at (9,3.3);
    \coordinate[label=below:{$(3,9)$}] (v39) at (10.25,4);

%    \coordinate[label=above:{$(0,7)$}] (v07) at (3,1.7);
%    \coordinate[label=left:{$(1,5)$}] (v15) at (1.75,1);
%    \coordinate[label=below:{$(2,6)$}] (v26) at (3,0.3);
%    \coordinate[label=right:{$(3,9)$}] (v39) at (4.25,1);

    \foreach \point in {v0,v1,v2,v3,v4,v5,v6,v7,v8,v9,v07,v15,v26,v39}
    \fill[black] (\point) circle (2pt);

    \draw (v0) -- (v1) -- (v2) -- (v3) -- cycle;
    \draw (v0) -- (v2);
    \draw (v1) -- (v4);

    \draw (v5) -- (v6) -- (v9) -- (v7) -- cycle;
    \draw (v6) -- (v7) -- (v8);

    \draw (v07) -- (v15) -- (v26) -- (v39) -- cycle;
    \draw (v07) -- (v26);

    \draw (1,5.8) node {$G_a$};
    \draw (5,5.8) node {$G_b$};
    \draw (9,5.8) node {$G_a \wedge_{\mu} G_b$};
%    \draw (1,2.5) node {$V(G_a) = \{0,1,2,3,4\}$};
%    \draw (5,2.5) node {$V(G_b) = \{5,6,7,8,9\}$};
%    \draw (3,-1.5) node {};

\end{tikzpicture}
\caption{The matching $\mu = \{(0,7),(1,5),(2,6),(3,9)\}$ is a 2-core alignment of $G_a$ and $G_b$: $\delta(G_a \wedge_{\mu} G_b) = 2$ and extending $\mu$ with $(4,8)$ leads to a minimum degree of 0.}
\label{fig:alignment}
\end{figure}

Figure~\ref{fig:alignment} illustrates the concepts of aligned intersection and $k$-core alignment.

\begin{definition}
  The $k$-core alignment estimator $\hat{\bfmu}_k$ selects a $k$-core alignment of $(\bfG_a,\bfG_b)$.
  If there is more than one $k$-core alignment, it makes an arbitrary choice.
  %, i.e. $\hat{\mu} = \argmax_{\mu \in \mathcal{M}_k(G_a,G_b)} |\mu|$.
\end{definition}

\subsection{Results}
\label{section:results}
We have the following results about the performance of the $k$-core alignment estimator.
\begin{theorem}[Achievability]
  \label{thm:main}
  Let $p$ satisfy the conditions
  \begin{align}
    p_{11} &\geq \omega\pth{\frac{1}{n}}\label{density}\\
    p_{11} &\leq \frac{1}{8e^3}\label{sparsity}\\
    \frac{p_{01}p_{10}}{p_{00}p_{11}} + p_{01} + p_{10} &\leq n^{-\Omega(1)}.\label{correlation} 
  \end{align}
  Then there is a choice of $k$ such that with probability $1-o(1)$, the $k$-core alignment estimator $\hat{\bfmu}_k$ satisfies $\hat{\bfmu}_k \subseteq \bfmu$ and $|\hat{\bfmu}_k| \geq n(1-o(1))$.
  That is, the estimator includes no incorrect pairs %($|\hat{\bfmu}_k \setminus \bfmu| = 0$)
  and almost all correct pairs. %($|\bfmu \setminus \hat{\bfmu}_k| = o(n)$).
\end{theorem}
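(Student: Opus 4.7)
The plan is to choose $k = k(n)$ in a suitable window so that (a) the $k$-core of an \ErRe graph with edge probability $p_{11}$ contains almost all vertices, and (b) the ``incorrect'' intersection edges are too rare to sustain a matching whose aligned intersection has minimum degree $k$. Because $np_{11} \to \infty$ by~\eqref{density}, $p_{11}$ is bounded away from $1$ by~\eqref{sparsity}, and $p_{10}, p_{01} \leq n^{-\Omega(1)}$ by~\eqref{correlation}, the window $np_{11}^2 + np_{11} n^{-\Omega(1)} \ll k \ll np_{11}$ is non-empty, and I will also insist on $k = \omega(\log n)$.

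For the existence of a large correct $k$-core alignment, let $\mathbf{H} \deq \bfG_a \wedge_\bfmu \bfG_b$, which after identifying its vertex set $\bfmu$ with $\mcU_a$ is distributed as an \ErRe graph on $n$ vertices with edge probability $p_{11}$. By the standard analysis of the $k$-core of \ErRe graphs, with $k = o(np_{11})$ the $k$-core of $\mathbf{H}$ has size $n(1-o(1))$ w.h.p. Since the estimator is free to pick any $k$-core alignment, I need the stronger statement that every maximal subset $S \subseteq V(\mathbf{H})$ with $\delta(\mathbf{H}[S]) \geq k$ has size $n(1-o(1))$. A two-range argument gives this: Chernoff plus a union bound over subsets rules out $\delta(\mathbf{H}[S]) \geq k$ for small $S$ of size $O(k/p_{11})$, while for larger $S$ up to size $n(1-o(1))$, the expected number of $\mathbf{H}$-neighbors of any $v \notin S$ inside $S$ is $|S| p_{11} \gg k$, and concentration exhibits an extending $v$.

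For correctness, I show w.h.p.\ that no matching $\hat\mu$ with an incorrect pair attains $\delta(\bfG_a \wedge_{\hat\mu} \bfG_b) \geq k$. For any incorrect pair $v = (u_a, u_b) \in \hat\mu \setminus \bfmu$, the intersection-graph edges incident to $v$ are (essentially) independent Bernoulli with parameter $(p_{11} + p_{10})(p_{11} + p_{01})$, because the condition $u_b \neq \bfmu(u_a)$ decouples the $\bfG_a$-edges at $u_a$ from the $\bfG_b$-edges at $u_b$. By~\eqref{correlation} and~\eqref{sparsity} this parameter is at most $p_{11}^2 + O(p_{11} n^{-\Omega(1)}) \ll p_{11}$. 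A $\hat\mu$ with $t \geq 1$ incorrect pairs then requires $\geq tk/2$ edges among $\leq tn$ independent Bernoulli trials at this low parameter, an event of probability $\exp(-\Omega(tk))$ by Chernoff provided $k \gg np_{11}^2 + np_{11} n^{-\Omega(1)}$. Combined with at most $n^{2t}/t!$ choices for the incorrect part of $\hat\mu$, the union bound over $t \geq 1$ gives $o(1)$ thanks to $k = \omega(\log n)$.

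The main obstacle is closing this union bound. Naively allowing the correct part $\hat\mu \cap \bfmu$ of the candidate to vary separately would contribute an unaffordable $2^n$ factor from subset enumeration. The key observation is that the events that would enable the incorrect vertices to have degree $\geq k$ depend only on $\bfG_a$- and $\bfG_b$-edges incident to incorrect vertices, a collection disjoint from those determining the correct-correct subgraph; so the correct part drops out of the union bound and only the incorrect part must be enumerated. Care is also needed near the sparsity ceiling in~\eqref{sparsity}, where the gap between $np_{11}$ and $np_{11}^2$ is merely the constant $p_{11}^{-1} \geq 8 e^3$, just wide enough to place $k$.
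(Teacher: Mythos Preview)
Your correctness argument has a real gap: the intersection-edge indicators $(\bfG_a \wedge_{\hat\mu} \bfG_b)(w)$, for $w$ touching $\hat\mu \setminus \bfmu$, are \emph{not} jointly independent, and the positive correlation among them is strong enough to invalidate a naive Chernoff bound. Your decoupling observation is correct for a \emph{single} $w$ (it makes $\bfG_a(\alpha(w))$ independent of $\bfG_b(\beta(w))$), but it does not decouple $\bfG_a(\alpha(w))$ from $\bfG_b(\beta(w'))$ for $w'\neq w$. Concretely, let $\hat\mu$ differ from $\bfmu$ by a single transposition, let $v\in\hat\mu\cap\bfmu$ be a correctly matched vertex, and let $w,w'$ be the two edges of $\bfG_a \wedge_{\hat\mu} \bfG_b$ joining $v$ to the two incorrect vertices. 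Then $(\alpha(w),\beta(w'))\in\ell(\bfmu)$ and $(\alpha(w'),\beta(w))\in\ell(\bfmu)$, so $\E[(\bfG_a \wedge_{\hat\mu} \bfG_b)(w)\,(\bfG_a \wedge_{\hat\mu} \bfG_b)(w')]=p_{11}^2$ whereas the product of the means is $(p_{1*}p_{*1})^2$; when $p_{10},p_{01}$ are tiny these differ by a factor of order $p_{11}^{-2}$. This is precisely the dependence the paper's cycle--path decomposition of $\ell(\hat\mu)+\ell(\bfmu)$ is built to track: the moment-generating-function bound in Lemma~\ref{lemma:A-ub} carries a quadratic term $p_{11}^2(z^2-1)^2$ in the exponent that would be absent for independent Bernoullis, and that term is the sole reason the sparsity ceiling~\eqref{sparsity} enters the proof (through the $q_2$ branch of Lemma~\ref{lemma:quadratic} and Lemma~\ref{lemma:xi-ub}). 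Under your independent-Bernoulli heuristic, \eqref{sparsity} would appear superfluous, which is a signal that the heuristic is discarding the hard part of the problem.

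Two smaller points. You take $k \ll np_{11}$, but near the ceiling~\eqref{sparsity} the ratio $np_{11}/(np_{11}^2)=p_{11}^{-1}$ is only the constant $8e^3$, so no $k$ satisfies both ``$\ll$'' relations (and in the regime $np_{11}=o(\log n)$ your additional requirement $k=\omega(\log n)$ is incompatible with $k\ll np_{11}$); the paper instead takes $k=(1-o(1))np_{11}$ via Theorem~\ref{thm:k-core-size}, and Lemma~\ref{lemma:xi-ub} only needs $k\ge\Omega(np_{11})$. Also, your two-range argument that every maximal $S$ with $\delta(\mathbf{H}[S])\ge k$ is large is unnecessary: once the correctness step yields $\hat\bfmu_k\subseteq\bfmu$ w.h.p., maximality of $\hat\bfmu_k$ among \emph{all} matchings forces it to coincide with the unique $k$-core of $\mathbf{H}=\bfG_a\wedge_{\bfmu}\bfG_b$, whose size is already controlled.
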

Section~\ref{section:achievability} contains the proof.
The main condition of Theorem~\ref{thm:main}, \eqref{density}, requires $\bfG_a \wedge_{\bfmu} \bfG_b$ to have an average degree that grows with $n$.
Condition \eqref{sparsity} is very mild sparsity constraint on $\bfG_a \wedge_{\bfmu} \bfG_b$.
Condition \eqref{correlation} requires $\bfG_a$ and $\bfG_b$ to have sufficient positive correlation and to be mildly sparse. 
\begin{theorem}[Converse]
  \label{thm:converse}
  Let
  \begin{align}
    p_{11} & \leq \mathcal{O}\pth{\frac{1}{n}} \label{converse-density}\\
    \frac{p_{01}p_{10}}{p_{11}p_{00}} &< 1. \label{positive-correlation}
  \end{align}
  Then for any estimator $\hat{\bfmu}$ of $\bfmu$ given $(\bfG_a,\bfG_b)$ and any integer sequence $\epsilon(n) \leq o(n)$, the probability that $\hat{\bfmu} \subseteq \bfmu$ and $|\hat{\bfmu}| \geq n-\epsilon(n)$ is $o(1)$.
\end{theorem}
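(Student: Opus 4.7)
My plan is to exhibit a large automorphism group of the posterior distribution $\Pr[\bfmu = \cdot \mid \bfG_a, \bfG_b]$ coming from isolated vertices of $\bfG_a$ or $\bfG_b$, and to use this symmetry to force any estimator meeting the theorem's requirements to fail with probability $1-o(1)$. The starting point is the identity
\[
\Pr[\bfmu = \mu \mid \bfG_a = G_a, \bfG_b = G_b] \propto r^{|E(G_a \wedge_\mu G_b)|},
\qquad r = \frac{p_{11}p_{00}}{p_{10}p_{01}} > 1,
\]
recalled earlier in the paper: the posterior depends on $\mu$ only through the intersection edge count, and is in particular invariant under any automorphism of $\bfG_a$ (acting by $\mu \mapsto \mu \circ \sigma_a^{-1}$) or of $\bfG_b$ (acting by $\mu \mapsto \sigma_b \circ \mu$), since these actions preserve $|E(\bfG_a \wedge_\mu \bfG_b)|$.

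The positive-correlation hypothesis combined with $p_{11} = O(1/n)$ gives $p_{10}p_{01} < p_{11}p_{00} \leq p_{11} = O(1/n)$, forcing $\min(p_{10},p_{01}) = O(1/\sqrt{n})$. The main case I would treat in detail is when (say) $p_{*1} = O(1/n)$; by the symmetry of the model under interchanging $\bfG_a$ and $\bfG_b$, the other side is analogous. In this case the isolated vertex set $I_b \subset \mcU_b$ has expected size $n(1-p_{*1})^{n-1} = \Omega(n)$ and concentrates around its mean by a standard second-moment estimate, so $|I_b| = \Omega(n)$ with high probability. Every $\sigma_b \in \mathrm{Sym}(I_b)$ is an automorphism of $\bfG_b$, and the corresponding action preserves the posterior. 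Conditioning on $(\bfG_a,\bfG_b)$ and on the random set $J_a \deq \bfmu^{-1}(I_b)$, the bijection $\bfmu|_{J_a} : J_a \to I_b$ is therefore uniformly distributed over all $|I_b|!$ possibilities. For any estimator $\hat{\bfmu}$ satisfying $\hat{\bfmu} \subseteq \bfmu$ and $|\hat{\bfmu}| \geq n - \epsilon(n)$, at most $\epsilon(n)$ of the vertices in $J_a$ can be missed, so $\hat{\bfmu}$ must correctly commit on at least $|I_b| - \epsilon(n)$ of them. The probability that a prescribed partial matching of that size agrees with a uniformly-random bijection on $I_b$ is at most $\epsilon(n)!/|I_b|!$, which tends to zero because $|I_b| - \epsilon(n) = \Omega(n) - o(n) \to \infty$.

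The main obstacle is the complementary regime in which both $p_{1*}$ and $p_{*1}$ substantially exceed $1/n$, so that neither $\bfG_a$ nor $\bfG_b$ has a linear-sized isolated set and the direct symmetry argument no longer suffices. I would handle this case either by an information-theoretic Fano-style bound based on $I(\bfmu;\bfG_a,\bfG_b) \leq \binom{n}{2} I(X_a;X_b)$ together with an explicit expansion whose dominant term is $p_{11}\log(p_{11}/(p_{1*}p_{*1}))$, or by a refined swap-symmetry argument applied to the $\Omega(n)$ isolated pairs of the intersection graph $\bfG_a \wedge_{\bfmu} \bfG_b$ itself. A direct calculation gives
\[
|E(G_a \wedge_{\mu'} G_b)| - |E(G_a \wedge_\mu G_b)| = \sum_{x \notin \{v_a,u_a\}} \bigl[G_a(v_a,x)G_b(u_b,\mu(x)) + G_a(u_a,x)G_b(v_b,\mu(x))\bigr]
\]
for a swap of two intersection-isolated matched pairs, and under the hypotheses the mean of this sum is $O(np_{1*}p_{*1})$, so a random swap is neutral with probability bounded below; extracting $\Omega(n)$ pairwise-disjoint neutral swaps via a moment argument would then produce an approximate symmetry group of size $2^{\Omega(n)}$ and close the argument by the same counting bound.
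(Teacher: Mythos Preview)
Your overall strategy---exploit a large symmetry group of the posterior coming from isolated vertices---is the paper's as well. But your execution takes an unnecessary detour and leaves a genuine gap.

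The paper works directly with the intersection graph $\bfG_a \wedge_{\bfmu} \bfG_b$, which is $\ER(n,p_{11})$. Since $p_{11} = \mathcal{O}(1/n)$, this graph has $\Omega(n)$ isolated vertices with high probability in \emph{every} regime allowed by the hypotheses; no case analysis is needed. Your decision to first look at isolated vertices of $\bfG_a$ or $\bfG_b$ separately is what forces the case split, and the ``complementary regime'' (both marginals $\gg 1/n$) is precisely where your proposal is incomplete. For instance, with $p_{11} = 1/n$ and $p_{10} = p_{01} = 1/(2\sqrt{n})$, the hypotheses hold, both marginals are $\Theta(1/\sqrt{n})$, neither $\bfG_a$ nor $\bfG_b$ has any isolated vertices, and you are left to improvise.

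Your fallback ``neutral swap'' idea is close to the paper's actual argument, but you impose an unnecessary restriction. You look for swaps of intersection-isolated pairs that \emph{preserve} $|E(\bfG_a \wedge_{\mu} \bfG_b)|$ and then try to show enough such swaps exist. The paper's observation (Lemma~\ref{lemma:intersection}, quoting \cite{cullina_exact_2017}) is cleaner: for \emph{any} automorphism $\pi$ of $\bfG_a \wedge_{\mu} \bfG_b$---in particular any permutation of its isolated vertices---the resulting bijection $\gamma(\pi)$ satisfies $|E(\bfG_a \wedge_{\gamma(\pi)} \bfG_b)| \geq |E(\bfG_a \wedge_{\mu} \bfG_b)|$, hence has posterior at least that of $\mu$. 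You actually computed this monotonicity yourself (your displayed sum is nonnegative) but did not exploit it. Once you have $(\Omega(n))!$ bijections each with posterior at least that of the truth, the paper's elementary list-estimation bound (Lemma~\ref{lemma:recip}) immediately gives $\Pr[\bfmu \in \bfS] \leq \epsilon(n)!/(\Omega(n))! + o(1)$. There is no need to isolate neutral swaps, assemble a $2^{\Omega(n)}$ group by a moment argument, or invoke Fano.
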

Section~\ref{section:converse} contains the proof.
Observe that the condition \eqref{converse-density} matches \eqref{density}.
Condition~\eqref{positive-correlation} is weaker than than \eqref{correlation}: the converse is valid for any amount of positive correlation.

For the exact recovery problem, Cullina and Kiyavash \cite{cullina_exact_2017} showed that if $p$ satisfies the conditions
\begin{align*}
  p_{11} &\geq \frac{\log n + \omega(1)}{n}\\
  p_{11} + p_{01} +p_{10} &\leq \mathcal{O}{\left(\frac{1}{\log n}\right)}\\
  \frac{p_{01}p_{10}}{p_{11}p_{00}} &\leq \mathcal{O}{\left(\frac{1}{(\log n)^3}\right)},
\end{align*}
then the maximum a posteriori estimator for $\bfmu$ given $(\bfG_a,\bfG_b)$ is correct with probability $1 - o(1)$.
Additionally, if $\bfp$ satisfies
\begin{equation*}
  p_{11} \leq \frac{\log n - \omega(1)}{n} \quad \text{and} \quad
  \frac{p_{01}p_{10}}{p_{11}p_{00}} < 1,
\end{equation*}
then any estimator for for $\bfmu$ given $(\bfG_a,\bfG_b)$ is correct with probability $o(1)$.
In other words, exact recovery of $\bfmu$ requires logarithmic average degree in the intersection graph while recovery of almost all of $\bfmu$ requires only a growing average degree.

\subsection{Product graphs}
The aligned intersection of $G_a$ and $G_b$ has another interpretation.
Let $G_a \times G_b$ be the tensor product of $G_a$ and $G_b$.
This is the graph with $V(G_a \times G_b) = V(G_a) \times V(G_b)$ and
\[
  (G_a \times G_b)(\{(u_a,u_b),(v_a,v_b)\}) = G_a(\{u_a,v_a\}) \wedge G_b(\{u_b,v_b\}).
\]
In other words, the adjacency matrix of $G_a \times G_b$ is the tensor product of the adjacency matrices of $G_a$ and $G_b$.
Then $G_a \wedge_{\mu} G_b = (G_a \times G_b)[\mu]$.

From this point of view, exact recovery of $\bfmu$ corresponds to finding a dense $n$-vertex subgraph inside the $n^2$-vertex graph $G_a \times G_b$.
This looks superficially like recovering a single dense community is a stochastic block model, a problem which has been extensively studied.
There are two important differences.
First, we only need to search over subgraphs induced by matchings, not all $n$-vertex subgraphs.
This does not significantly reduce the total number of candidate subgraphs, but it has a bigger effect on the number of subgraphs that are nearly equal to the true matching.
Second, the edge random variables in $G_a \times G_b$ are not jointly independent.
Because of this, bounding the probability of each error event requires some care.

The fact the we are searching for a subgraph of size $\sqrt{|V(G_a \times G_b)|}$ has potential implications for the computational tractability of this problem due to the planted clique hypothesis \cite{wu_statistical_2018} and associated conditional hardness results for planted dense subgraph problems.
However, in this paper, we focus only on information-theoretic thresholds.

Note that the $k$-core alignment of $G_a$ and $G_b$ is not the $k$-core of $G_a \times G_b$, which in general will be much larger and not induced by a matching.

\subsection{Proof outline}
Our achievability proof has the following structure.
First, we need to show that the true alignment of $\bfG_a$ and $\bfG_b$ yields a large $k$-core alignment.
This follows easily from known results about the $k$-core of an \ErRe graph.
The majority of our work is devoted to the second task: showing that this is the only $k$-core alignment.
For each matching $\mu \not\subseteq \mu^*$, we need to bound the probability that it is a $k$-core alignment of $G_a$ and $G_b$.
A large number of matchings can be immediately ruled by the maximality part of the definition of $k$-core alignment.
We define an error event for each of the remaining matching and use a union bound over them.
There are exponentially many of these error events, so it is crucial that our bound on error probability depends on the distance between the imposter matching and the true matching. 
This part of the argument is in Section~\ref{section:weak} and is summarized in Lemma~\ref{lemma:union-bound}.

In order for $\mu$ to be a $k$-core alignment, each vertex in $\bfG_a \wedge_{\mu} \bfG_b$ must have degree at least $k$.
This is much more difficult for vertices in $\mu$ that do not appear in the true matching $\mu^*$.
To bound the probability that $\mu$ is a $k$-core alignment, we consider the sum of the degrees of the vertices in $\mu \setminus \mu^*$.

The main technical task is to obtain large deviations upper bounds for these sum-of-degrees random variables.
Each of these random variables is the sum of many correlated indicator random variables.
To obtain tight bounds, we take advantage of the structure of the correlation by analyzing the cycle-path decomposition of the imposter matching relative to the true matching.
This decomposition is explained in Section~\ref{section:lifted}.
In Lemma~\ref{lemma:A-ub}, whose proof is in Appendix~\ref{app:gf}, we bound the generating function for the sum-of-degrees random variable using combinatorial arguments.
This allows us to find conditions under which the tails of their distributions behave like the tails of Poisson random variables.

Our converse proof is based on the concept of list estimation or list decoding.
It has two main components.
First, we find a relationship between the number of automorphisms of $\bfG_a \wedge_{\bfmu} \bfG_b$ and the list length for any list estimator that succeeds with high probability.
Second, we use the fact that sufficiently sparse \ErRe graphs have many isolated vertices to obtain a lower bound on the number of automorphisms of $\bfG_a \wedge_{\bfmu} \bfG_b$.

\section{Achievability}
\label{section:achievability}
\subsection{Weak $k$-core alignments}
\label{section:weak}
The property discussed at the start of Section~\ref{subsection:core} that leads to the uniqueness of the $k$-core has the following analogue for $k$-core alignments.
If $\delta(G_a \wedge_{\mu} G_b) \geq k$, $\delta(G_a \wedge_{\mu'} G_b) \geq k$, and $\mu \cup \mu'$ is a matching, then $\delta(G_a \wedge_{\mu \cup \mu'} G_b) \geq k$.
The graphs $\G{\mu}$ and $\G{\mu'}$ are induced subgraphs of $\G{(\mu \cup \mu')}$.
Thus each vertex in $\G{(\mu \cup \mu')}$ has a degree that is at least as large as its degree in the subgraph.

Because $\mu \cup \mu'$ is not guaranteed to be a matching, there may be more than one $k$-core alignment of $G_a$ and $G_b$.
For example, if both $G_a$ and $G_b$ are complete graphs with $n$ vertices, every bijection between $V(G_a)$ and $V(G_b)$ is an $n$-core alignment.

The $k$-core alignment estimator can make an error when some matching other than $\bfmu$ is a $k$-core alignment.
To analyze this event, we introduce weak $k$-core alignments.
\begin{definition}
  Let $\operatorname{deg}_{G} : V(G) \to \N$ be the degree function for the graph $G$.
  Let $\mu$ and $\mu^*$ be matchings of $V(G_a)$ and $V(G_b)$ and let
  \[
M(\mu,\mu^*;G_a,G_b) = \sum_{v \in \mu \setminus \mu^*} \operatorname{deg}_{G_a \wedge_{\mu} G_b}(v).
  \]
  A matching $\mu$ is a weak $k$-core alignment of $G_a$ and $G_b$ relative to $\mu^*$ if $M(\mu,\mu^*;G_a,G_b) \geq k |\mu \setminus \mu^*|$.

  Let $\bfM_{\mu,\mu^*} = M(\mu,\mu^*;\bfG_a,\bfG_b)$.
\end{definition}
Observe that if $\mu$ is a $k$-core alignment of $G_a$ and $G_b$, then it is also a weak $k$-core alignment of $G_a$ and $G_b$ relative to any matching $\mu^*$.
We have relaxed the property in two ways: first by only checking the vertices that are matched differently in $\mu$ than in $\mu^*$ and second by checking the average degree of these vertices rather than the minimum. 

All $\mu \subseteq \mu^*$ are trivially weak $k$-core alignments relative to $\mu^*$ (the sum is empty).
We will show that under certain conditions, every weak $k$-core alignment of $\bfG_a$ and $\bfG_b$ relative to $\bfmu$ is a subset of $\bfmu$.

\bcomment{
\begin{definition}
  Let $\mu \subseteq \mcU \times \mcU$ be a matching.
  Let $\id$ be the identity matching on $\mcU$.
  Fix $\mu^*$ and define the functions
\begin{align*}
  f(\mu) &= |\{(i,j) \in \mu : i = j\}| = |\mu \cap \id|\\
  g(\mu) &= |\{(i,j) \in \mu : i \neq j\}| = |\mu \setminus \id|\\
  h(\mu) &= |\{i : (i,j) \in \mu \setminus \id \} \cup \{j : (i,j) \in \mu \setminus \id \}|\\
  &= n - \max_{\mu' \supseteq \mu} f(\mu')
\end{align*}
\end{definition}

\begin{definition}
  A matching $\mu$ is $\mu^*$-maximal if it satisfies these equivalent conditions
  \begin{align*}
    \mu^* &\subseteq (\alpha(\mu) \times \mcU_b) \cup (\mcU_a \times \beta(\mu))\\
    \forall (i,j) \in \mu^* &: i \in \alpha(\mu) \wedge j \in \beta(\mu)
  \end{align*}
  i.e. if no pairs from $\mu^*$ can be added to $\mu$ without destroying the matching property.
\end{definition}
}
\begin{definition}
  A matching $\mu$ is $\mu^*$-maximal if no pairs from $\mu^*$ can be added to $\mu$ without destroying the matching property.
  More precisely, for all $(i,j) \in \mu^*$, either $i \in \alpha(\mu)$ or $j \in \beta(\mu)$.
  Let \TODO{decide on this notation}
  \[
    \mcM(\mu^*,d) = \{\mu : \text{$\mu$ is $\mu^*$ maximal and $|\mu \setminus \mu^*| = d$}\}.  
  \]
\end{definition}
This property allows us to show that a large number of matchings cannot be $k$-core alignments of $(\bfG_a,\bfG_b)$ because they are not $\bfmu$-maximal.
If it is possible to add any pairs from the true matching $\bfmu$ to $\mu$, then $\hat{\bfmu}_k \neq \mu$. 

\TODO{Explain outline}

\bcomment{
 and $|\mu^* \setminus \mu| = b$?
Of the $b$ pairs in $\mu^* \setminus \mu$, $b-a$ are only covered by a pair in $\mu$ in the same row, $b-a$ are only covered by a pair in the same column, and $2a-b$ are covered by both.
\begin{align*}
  \binom{n}{n-b,b-a,b-a,2a-b} a!
\end{align*}}

The following lemma allows us to bound the probability of error of the $k$-core alignment estimator.
\begin{lemma}
  \label{lemma:union-bound}
  Let $\mu^*$ be a bijection. Then
  \begin{equation*}
    \Pr\bigg[\bigvee_{\mu} (\delta(\bfG_a \wedge_{\mu} \bfG_b) \geq k) \wedge (\mu \not\subseteq \mu^*) \bigg] \leq \exp(n^2\xi) -1 \label{ub}
  \end{equation*}
  where the disjunction is over all $\mu \subseteq \mcU_a \times \mcU_b$ and 
  \begin{equation}
    \log \xi = \max_{d \geq 1}^n \max_{\mu \in \mcM(\bfmu,d)} \frac{1}{d} \log \Pr[\bfM_{\mu,\mu^*} \geq kd] \label{xi-def}
  \end{equation}
  % If $\Pr[M_{\mu,\mu^*} \geq k|\mu \setminus \mu^*|] \leq \xi^{|\mu \setminus \mu^*|}$ for all $\mu^*$-maximal matchings and $\xi \leq o(n^{-2})$, then $\Pr[\hat{\mu} \subseteq \mu^*] \geq 1 -o(1)$.
\end{lemma}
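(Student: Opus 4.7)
The plan is a union bound over $\mu^*$-maximal matchings, after first reducing the bad event to the existence of a $\mu^*$-maximal weak $k$-core alignment.

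Reduction to $\mu^*$-maximal matchings. Suppose some matching $\mu$ has $\delta(\bfG_a \wedge_\mu \bfG_b) \geq k$ and $\mu \not\subseteq \mu^*$. I would extend it to a $\mu^*$-maximal $\mu'$ by greedily adding every $\mu^*$-pair that still fits, i.e.\ $\mu' = \mu \cup \{(i,j) \in \mu^* : i \notin \alpha(\mu), j \notin \beta(\mu)\}$. Since $\mu^*$ is a bijection this is a matching, and by construction $\mu' \setminus \mu^* = \mu \setminus \mu^*$, so $d := |\mu' \setminus \mu^*| \geq 1$. Because $\bfG_a \wedge_\mu \bfG_b$ is an induced subgraph of $\bfG_a \wedge_{\mu'} \bfG_b$, degrees of the retained vertices can only grow, so each $v \in \mu' \setminus \mu^* = \mu \setminus \mu^*$ still has degree at least $k$ in $\bfG_a \wedge_{\mu'} \bfG_b$. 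Summing these $d$ degrees gives $\bfM_{\mu',\mu^*} \geq kd$, so $\mu'$ is a weak $k$-core alignment relative to $\mu^*$. The bad event is therefore contained in $\bigcup_{d \geq 1}\bigcup_{\mu \in \mcM(\mu^*,d)} \{\bfM_{\mu,\mu^*} \geq kd\}$.

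Counting $\mcM(\mu^*,d)$. The key observation is that for a $\mu^*$-maximal $\mu$, the ``fixed'' part $\mu \cap \mu^*$ is forced by $\mu \setminus \mu^*$: it consists of exactly those $(i,j) \in \mu^*$ with $i \notin \alpha(\mu \setminus \mu^*)$ and $j \notin \beta(\mu \setminus \mu^*)$. (Any such pair could be added to $\mu$, so maximality demands its inclusion; conversely, any $\mu^*$-pair using a vertex already claimed by $\mu \setminus \mu^*$ would violate the matching property.) Hence $\mu$ is completely determined by the $d$-matching $\mu \setminus \mu^*$, which consists of $d$ pairs in $\mcU_a \times \mcU_b$ disjoint from $\mu^*$. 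This yields the crude but sufficient bound $|\mcM(\mu^*,d)| \leq \binom{n}{d}^2 d! \leq n^{2d}/d!$.

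Assembly. By the definition of $\xi$, $\Pr[\bfM_{\mu,\mu^*} \geq kd] \leq \xi^d$ uniformly over $\mu \in \mcM(\mu^*,d)$. Combining with the two bounds above,
\begin{equation*}
\Pr\!\left[\exists\,\mu:\ \delta(\bfG_a \wedge_\mu \bfG_b) \geq k \text{ and } \mu \not\subseteq \mu^*\right] \leq \sum_{d=1}^{n} \frac{n^{2d}}{d!}\,\xi^d \leq \sum_{d=1}^{\infty} \frac{(n^2 \xi)^d}{d!} = e^{n^2 \xi} - 1.
\end{equation*}
The only mildly subtle step is the reduction, and in particular the monotonicity of induced-subgraph degrees that keeps every vertex in $\mu \setminus \mu^*$ of high degree after closing with $\mu^*$-pairs; once that is in place the argument is pure bookkeeping, with the fact that $\mu \cap \mu^*$ is forced by $\mu \setminus \mu^*$ being the ingredient that prevents the counting from blowing up.
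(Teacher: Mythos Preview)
Your proof is correct and follows the same approach as the paper: reduce to $\mu^*$-maximal matchings via the greedy extension, observe that $\bfM_{\mu',\mu^*}\geq kd$ by monotonicity of degrees under induced-subgraph inclusion, bound $|\mcM(\mu^*,d)|\leq n^{2d}/d!$ using that $\mu\cap\mu^*$ is determined by $\mu\setminus\mu^*$, and sum the resulting series. Your write-up is in fact slightly more careful than the paper's, which glosses over the degree-monotonicity step and writes the extension as $\mu\cup(\mu^*\setminus(\alpha(\mu)\times\beta(\mu)))$ rather than your (correct) version requiring both endpoints to be free.
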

\begin{proof}
  If $|\mu \setminus \mu^*| = d$ and $\delta(\bfG_a \wedge_{\mu} \bfG_b) \geq k$, directly from the definition of $\bfM_{\mu,\mu^*}$ we have $\bfM_{\mu,\mu^*} \geq kd$.
  Any matching $\mu$ has a unique extension to a $\mu^*$-maximal matching that is produced by adding as many elements of $\mu^*$ as possible: $\mu' = \mu \cup (\mu^* \setminus (\alpha(\mu) \times \beta(\mu)))$.
  Then $\mu \setminus \mu^* = \mu' \setminus \mu^*$, $\mu' \in \mcM(\mu^*,d)$, and $\bfM_{\mu',\mu^*} \geq kd$.
  Thus the event in the statement of the lemma is equivalent to 
\begin{align*}
  \Pr\bigg[\bigvee_{d \geq 1}^n \bigvee_{\mu' \in \mcM(\mu^*,d)} \bfM_{\mu',\mu^*} \geq kd \bigg]
  &\leq \sum_{d \geq 1}^n \sum_{\mu' \in \mcM(\mu^*,d)} \Pr[\bfM_{\mu',\mu^*} \geq kd] \\
  &\leql{a} \sum_{d \geq 1}^{n} \frac{n^{2d}}{d!} \xi^d  \\
  &\leq \exp(n^2\xi) - 1
\end{align*}
where $(a)$ uses the bound $|\mcM(\mu^*,d)| \leq \frac{n^{2d}}{d!}$ and the definition of $\xi$ in \eqref{xi-def}.
Because $\mu^*$ is a bijection, each $\mu \in \mcM(\mu^*,d)$ is fully specified by $\mu \setminus \mu^*$.
There are $\binom{n}{d}$ choices of $\alpha(\mu \setminus \mu^*)$, $\binom{n}{d}$ choices of $\beta(\mu \setminus \mu^*)$, and $d!$ bijections between these sets, and $\binom{n}{d} \leq \frac{n^d}{d!}$.
\end{proof}

\subsection{Lifted matchings}
\label{section:lifted}
Given two matchings $\mu,\mu' \subseteq \mcU_a \times \mcU_b$, let $(\mu + \mu') : \mcU_a \times \mcU_b \to \N$ be the multisubset of $\mcU_a \times \mcU_b$ in which the multiplicity of each element is the sum of its multiplicity in $\mu$ and its multiplicity in $\mu'$.
In the bipartite multigraph $(\mcU_a,\mcU_b, \mu + \mu')$, all vertices have degree $0$, $1$, or $2$, so the multigraph is the union of paths and even-length cycles (including cycles of length two, which are pairs of parallel edges).
More precisely, the degree of $u_a \in \mcU_a$ is $\indset(u_a \in \alpha(\mu)) + \indset(u_a \in \alpha(\mu'))$ and the degree of $u_b \in \mcU_b$ is $\indset(u_b \in \beta(\mu)) + \indset(u_b \in \beta(\mu'))$.

A vertex pair $w \in \binom{\mu}{2}$ contains $0$, $1$ or $2$ elements from $\mu \setminus \mu^*$.
The others are from $\mu^* \cap \mu$.

%can be partitioned into three subsets: $\binom{\mu \setminus \mu^*}{2}$, $(\mu \setminus \mu^*) \times (\mu \cap \mu^*)$, and $\binom{\mu \cap \mu^*}{2}$.
This matters because
\begin{align}
  \bfM_{\mu,\mu^*}
  &= \sum_{v \in \mu \setminus \mu^*} \sum_{w \in \binom{\mu}{2}} \indset(v \in w)\cdot(\bfG_a \wedge_{\mu} \bfG_b)(w)\nonumber\\
  &= \sum_{w \in \binom{\mu}{2}} |w \cap (\mu \setminus \mu^*)| (\bfG_a \wedge_{\mu} \bfG_b)(w) \label{alt-M}
\end{align}

Because $\mu^*$ is a bijection and $\alpha(\mu^*) = \mcU_a$, $\alpha(\mu \cap \mu^*)$ and $\alpha(\mu^* \setminus \mu)$ partition the vertex set $\mcU_a$.
Similarly  $\beta(\mu \cap \mu^*)$ and $\beta(\mu^* \setminus \mu)$ partition the vertex set $\mcU_b$.
At the level of vertex pairs, we have partitions of $\binom{\mcU_a}{2}$ and $\binom{\mcU_b}{2}$ into three regions each:
\begin{align*}
  \mcV_{a,i} &= \left\{ w_a \in \binom{\mcU_a}{2} : |w_a \cap \alpha(\mu \cap \mu^*)| = 2-i \right\}\\
  \mcV_{b,i} &= \left\{ w_b \in \binom{\mcU_b}{2} : |w_b \cap \beta(\mu \cap \mu^*)| = 2-i \right\}.
\end{align*}
for $i \in \{0,1,2\}$.
The important fact for us is that both $\ell(\mu)$ and $\ell(\mu^*)$ match elements of $\mcV_{a,i}$ with elements of $\mcV_{b,i}$. 

The matchings $\ell(\mu)$ and $\ell(\mu^*)$ also decompose into a union of paths and cycles.
Because $\mu^*$ is a bijection, the length of each path is odd and the edges from $\mu$ are never the initial or final edges in a path.
Each of these paths and cycles stays within one of the three regions.

\begin{figure}
  \centering
  \begin{tikzpicture}[scale=0.8]
    \draw (5,11.5) node {$\mu^* = \{(0,5),(1,6),(2,7),(3,8),(4,9)\}$};
    \draw (5,10.5) node {$\mu = \{(0,6),(2,8),(3,7),(4,9)\}$};
    
%    \draw (8,0) node {$\mcU_a$};
    \coordinate[label=above:{$0$}] (v0) at (2,8.5);
    \coordinate[label=above:{$1$}] (v1) at (3,8.5);
    \coordinate[label=above:{$2$}] (v2) at (4,8.5);
    \coordinate[label=above:{$3$}] (v3) at (5,8.5);
    \coordinate[label=above:{$4$}] (v4) at (6,8.5);

%    \draw (8,8.5) node {$\mcU_b$};
    \coordinate[label=below:{$5$}] (v5) at (2,7);
    \coordinate[label=below:{$6$}] (v6) at (3,7);
    \coordinate[label=below:{$7$}] (v7) at (4,7);
    \coordinate[label=below:{$8$}] (v8) at (5,7);
    \coordinate[label=below:{$9$}] (v9) at (6,7);

    \draw (3.5,9.5) node {$\alpha(\mu^* \setminus \mu)$};
    \draw (  6,9.5) node {$\alpha(\mu^* \cap \mu)$};
    \draw (3.5,6) node {$\beta(\mu^* \setminus \mu)$};
    \draw (  6,6) node {$\beta(\mu^* \cap \mu)$};

    \draw (1.7,8.8) -- (1.7,9.1) -- (5.3,9.1) -- (5.3,8.8);
    \draw (1.7,6.7) -- (1.7,6.4) -- (5.3,6.4) -- (5.3,6.7);
    \draw (5.7,8.8) -- (5.7,9.1) -- (6.3,9.1) -- (6.3,8.8);
    \draw (5.7,6.7) -- (5.7,6.4) -- (6.3,6.4) -- (6.3,6.7);

%    \draw (11,3.5) node {$\binom{\mcU_a}{2}$};
    \coordinate[label=above:{$\{0,1\}$}] (v01) at ( 0,3.5);
    \coordinate[label=above:{$\{0,2\}$}] (v02) at ( 1,3.5);
    \coordinate[label=above:{$\{0,3\}$}] (v03) at ( 2,3.5);
    \coordinate[label=above:{$\{1,2\}$}] (v12) at ( 3,3.5);
    \coordinate[label=above:{$\{1,3\}$}] (v13) at ( 4,3.5);
    \coordinate[label=above:{$\{2,3\}$}] (v23) at ( 5,3.5);

    \coordinate[label=above:{$\{0,4\}$}] (v04) at ( 6.1,3.5);
    \coordinate[label=above:{$\{1,4\}$}] (v14) at ( 7.1,3.5);
    \coordinate[label=above:{$\{2,4\}$}] (v24) at ( 8.1,3.5);
    \coordinate[label=above:{$\{3,4\}$}] (v34) at ( 9.1,3.5);

%    \draw (11,2) node {$\binom{\mcU_b}{2}$};
    \coordinate[label=below:{$\{5,6\}$}] (v56) at ( 0,2);
    \coordinate[label=below:{$\{5,7\}$}] (v57) at ( 1,2);
    \coordinate[label=below:{$\{5,8\}$}] (v58) at ( 2,2);
    \coordinate[label=below:{$\{6,7\}$}] (v67) at ( 3,2);
    \coordinate[label=below:{$\{6,8\}$}] (v68) at ( 4,2);
    \coordinate[label=below:{$\{7,8\}$}] (v78) at ( 5,2);

    \coordinate[label=below:{$\{5,9\}$}] (v59) at ( 6.1,2);
    \coordinate[label=below:{$\{6,9\}$}] (v69) at ( 7.1,2);
    \coordinate[label=below:{$\{7,9\}$}] (v79) at ( 8.1,2);
    \coordinate[label=below:{$\{8,9\}$}] (v89) at ( 9.1,2);

    \draw (2.5,4.5) node {$\mcV_{a,2}$};
    \draw (2.5,1) node {$\mcV_{b,2}$};
    \draw (7.7,4.5) node {$\mcV_{a,1}$};
    \draw (7.7,1) node {$\mcV_{b,1}$};

    \draw (-0.5,3.9) -- (-0.5,4.2) -- (5.5,4.2) -- (5.5,3.9);
    \draw (5.6,3.9) -- (5.6,4.2)   -- (9.6,4.2) -- (9.6,3.9);
    \draw (-0.5,1.6) -- (-0.5,1.3) -- (5.5,1.3) -- (5.5,1.6);
    \draw (5.6,1.6) -- (5.6,1.3)   -- (9.6,1.3) -- (9.6,1.6);

    \draw[dotted] (v0) -- (v5);
    \draw[dotted] (v1) -- (v6);
    \draw[dotted] (v2) -- (v7);
    \draw[dotted] (v3) -- (v8);
    \draw[dotted] (v4) to[out=-105,in=105] (v9);

    \draw[dotted] (v01) -- (v56);
    \draw[dotted] (v02) -- (v57);
    \draw[dotted] (v03) -- (v58);
    \draw[dotted] (v12) -- (v67);
    \draw[dotted] (v13) -- (v68);
    \draw[dotted] (v23) to[out=-105,in=105] (v78);
    \draw[dotted] (v04) -- (v59);
    \draw[dotted] (v14) -- (v69);
    \draw[dotted] (v24) -- (v79);
    \draw[dotted] (v34) -- (v89);

    \draw (v0) -- (v6);
    \draw (v2) -- (v8);
    \draw (v3) -- (v7);
    \draw (v4) to[out=-75,in=75] (v9);

    \draw (v02) -- (v68);
    \draw (v03) -- (v67);
    \draw (v23) to[out=-75,in=75] (v78);
    \draw (v04) -- (v69);
    \draw (v24) -- (v89);
    \draw (v34) -- (v79);

  \end{tikzpicture}
  \caption{Illustration of the decomposition of $\mu + \mu^*$ and $\ell(\mu) + \ell(\mu^*)$ into cycles and paths. The matchings $\mu$ and $\ell(\mu)$ are drawn with solid lines and the bijections $\mu^*$ and $\ell(\mu^*)$ are draw with dotted lines. The sets $\mcV_{a,0}$ and $\mcV_{b,0}$ are empty. }
  \label{figure:cycle-path}
\end{figure}
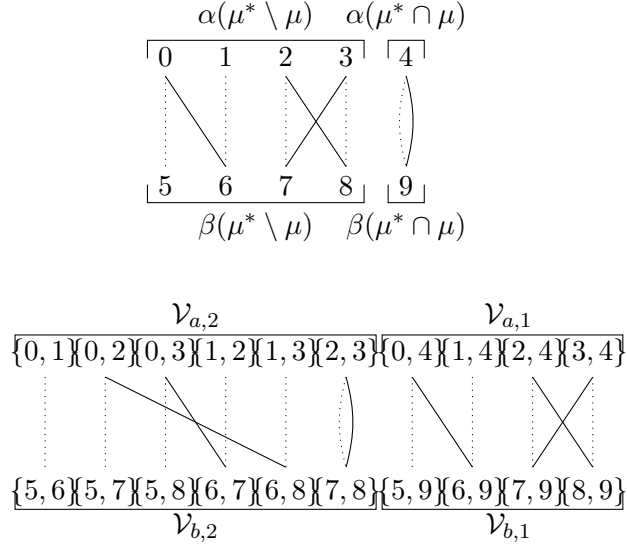

\begin{definition}
  For a matching $\mu$ and a bijection $\mu^*$, define the following.
  For $i \in \{0,1,2\}$, let $\nu_i = \ell(\mu) \cap (\mcV_{a,i} \times \mcV_{b,i})$ and $\nu_i^* = \ell(\mu^*) \cap (\mcV_{a,i} \times \mcV_{b,i})$.
  For $\ell \geq 1$, let $t_{i,\ell}^{\circ}$ be the number of cycles of length $2\ell$ and let $t_{i,\ell}$ be the number of paths of length $2\ell+1$ in the decomposition of $\nu_i + \nu_i^*$.  
\end{definition}
We have $\ell(\mu) = \nu_0 \cup \nu_1 \cup \nu_2$ and $\ell(\mu^*) = \nu_0^* \cup \nu_1^* \cup \nu_2^*$, so $t$ and $t^{\circ}$ capture the whole structure of $\ell(\mu) + \ell(\mu^*)$.

An example of this decomposition is illustrated in Figure~\ref{figure:cycle-path}.
The matchings $\ell(\mu)$ and $\ell(\mu^*)$ always have the same structure between $\mcV_{a,0}$ and $\mcV_{b,0}$.
Thus $t_{0,\ell}^{\circ} = 0$ for $\ell \geq 2$ and $t_{0,\ell} = 0$ for all $\ell$.
    The structure between $\mcV_{a,1}$ and $\mcV_{b,1}$ is $|\mu^*\cap \mu|$ copies of the structure between $\alpha(\mu^* \setminus \mu)$ and $\beta(\mu^* \setminus \mu)$ and consequently contains no cycles of length two, i.e. $t_{1,1}^{\circ} = 0$.
    Observe that between $\mcV_{a,2}$ and $\mcV_{b,2}$, a cycle of length two can only be produced by a cycle of length four in the region between $\alpha(\mu^* \setminus \mu)$ and $\beta(\mu^* \setminus \mu)$.
    Thus $t_{2,1}^{\circ} \leq |\mu \setminus \mu^*|/2$.

\subsection{Generating functions}

\begin{definition}
  Let $\gfM(z)$ be the generating function for the random variable $\bfM_{\mu,\mu^*}$:
  \[
    \gfM(z) = \E[z^{\bfM_{\mu,\mu^*}}].
  \]
\end{definition}

Let $p_{1*} = p_{10} + p_{11}$ and $p_{*1} = p_{01} + p_{11}$.
\begin{lemma}
  \label{lemma:A-ub}
  For a matching $\mu$ and a bijection $\mu^*$, if $\frac{p_{11}p_{00}}{p_{10}p_{01}} \geq 1$ then
  \begin{equation*}
    \log \gfM(z) \leq 
    t_{2,1}^{\circ}p_{11}(z^2-1)
    +\frac{\ttl}{4}(2\pmar(z^2 - 1) + p_{11}^2(z^2 - 1)^2)
  \end{equation*}
  where $\ttl = d(n-1) - 2t_{2,1}^{\circ}$ and $d = |\mu \setminus \mu^*|$.
\end{lemma}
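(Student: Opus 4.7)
The plan is to factor $\gfM(z)$ across connected components of the bipartite multigraph with edge multiset $\ell(\mu) + \ell(\mu^*)$, exploiting the cycle-path decomposition from Section~\ref{section:lifted}. Because $\bfG_a$ and $\bfG_b$ each have independent edges internally and couple only along $\ell(\mu^*)$, the indicators $(\bfG_a \wedge_\mu \bfG_b)(w)$ coming from different components are jointly independent, so $\gfM(z) = \prod_C \E[z^{\bfM_C}]$, where $\bfM_C$ is the contribution of component $C$ to $\bfM_{\mu,\mu^*}$. Starting from \eqref{alt-M}, each component lies in exactly one region $\mcV_{a,i} \times \mcV_{b,i}$ with $i \in \{0,1,2\}$; region $0$ contributes $1$, and components in region $i \geq 1$ attach multiplicity $c_i = i$ to every $\mu$-edge in $C$.

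For each $i \in \{1,2\}$ I would introduce the $2 \times 2$ transfer matrix $T_i = \twobytwo{1}{(z^i-1) p_{*1}}{p_{1*}}{(z^i-1) p_{11}}$ and obtain, by expanding $\prod_j\bigl(1 + (z^i-1)\bfG_a(w_a^j)\bfG_b(w_b^j)\bigr)$ pair-by-pair along $C$, the closed forms $\E[z^{\bfM_C}] = \tr(T_i^\ell)$ for a length-$2\ell$ cycle and $\E[z^{\bfM_C}] = (T_i^{\ell+1})_{00}$ for a length-$(2\ell+1)$ path (the latter via the identity $r = T_i e_0$ with $r = (1, p_{1*})^T$). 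A direct calculation gives $\tr T_i = 1 + (z^i-1)p_{11}$ and $\det T_i = (z^i-1)(p_{11}p_{00} - p_{10}p_{01})$, both nonnegative under the lemma's hypothesis, so both eigenvalues of $T_i$ are nonnegative. This is the single ingredient where positive correlation is essential.

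I would then split off the $t_{2,1}^\circ$ two-cycles in region $2$: each contributes $\tr(T_2) = 1 + (z^2-1)p_{11}$, and $\log(1+x) \leq x$ reproduces the first term of the bound. For every remaining component I plan to prove a per-component estimate
\[
\log \E[z^{\bfM_C}] \leq \frac{N_C}{4}\bigl(2\pmar(z^2-1) + p_{11}^2(z^2-1)^2\bigr)
\]
for a suitable count $N_C$, combining (i) the power-mean inequality $\lambda_+^m + \lambda_-^m \leq (\lambda_+^2 + \lambda_-^2)^{m/2}$ for $m \geq 2$ applied to the nonnegative eigenvalues, which yields $\tr(T_i^m) \leq (\tr T_i^2)^{m/2}$; (ii) the entrywise nonnegativity inequality $(T_i^m)_{00} \leq \tr(T_i^m)$, which subsumes paths into cycles; (iii) the identity $\tr T_i^2 = 1 + 2\pmar(z^i-1) + p_{11}^2(z^i-1)^2$ combined with $\log(1+x) \leq x$; and, for region $1$, the inequality $z - 1 \leq (z^2-1)/2$ valid for $z \geq 1$, which folds the smaller exponent into the same quadratic expression.

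Finally, I would aggregate the counts using the double-counting identity $\sum_{w \in \binom{\mu}{2}} |w \cap (\mu \setminus \mu^*)| = d(|\mu|-1) \leq d(n-1)$; after subtracting $2t_{2,1}^\circ$ for the two-cycles already handled, the remaining $\sum_C N_C \leq \ttl$, which yields the lemma. The main obstacle is making the per-component inequality and the aggregate count simultaneously tight enough: equality in the power-mean step is attained only for cycles whose length is a multiple of $4$, so paths and odd-length cycles generate slack which must be re-absorbed using $\pmar \leq p_{11}$ (a restatement of positive correlation). Carrying this accounting through precisely so the total fits inside $\ttl/4$ rather than a larger multiple is the delicate combinatorial core of the proof.
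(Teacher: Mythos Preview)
Your overall architecture matches the paper's: factor $\gfM$ over the cycle--path decomposition of $\ell(\mu)+\ell(\mu^*)$, express each factor via a $2\times 2$ transfer matrix, collapse long cycles to $2$-cycles by the $\ell_2$--$\ell_m$ norm inequality on the (real, nonnegative) eigenvalues, and finish with $\log(1+x)\le x$ together with $2(z-1)\le z^2-1$ to merge region~$1$ into region~$2$. Your $T_i$ has the same trace and determinant as the paper's $PZ$ (with $z\mapsto z^i$), so the cycle contributions $\tr(T_i^\ell)=a_\ell^\circ(z^i)$ agree, and the region-$2$ two-cycles really do give the $t_{2,1}^\circ p_{11}(z^2-1)$ term.

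The gap is in your step (ii). From $(T_i^{\ell+1})_{00}\le \tr(T_i^{\ell+1})$ you only get $a_\ell(z^i)\le a_{\ell+1}^\circ(z^i)\le a_2^\circ(z^i)^{(\ell+1)/2}$, i.e.\ exponent $(\ell+1)/2$ for a path containing $\ell$ edges of $\ell(\mu)$. To hit $\ttl/4$ you need exponent $\ell/2$: your double-counting identity assigns weight $N_C=i\ell$ to such a path, so the per-component target is $\frac{i\ell}{4}Q$, not $\frac{i(\ell+1)}{4}Q$. The half-power surplus is incurred once per path, and for non-bijective $\mu$ the number of paths can be of the same order as $|\nu_i|$, so the shortfall is not asymptotically negligible. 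Your proposed remedy (``slack'' from the power-mean step and $\pmar\le p_{11}$) does not close it: the power-mean inequality is already being used as an upper bound, so any slack there goes the wrong way, and the single inequality $\pmar\le p_{11}$ gives an $O(1)$ improvement in the constant inside $Q$, not an improvement in the exponent.

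The paper avoids this by proving the sharper path-to-cycle bound $a_\ell(z)\le a_\ell^\circ(z)$ for $\ell\ge 2$ \emph{before} applying the eigenvalue inequality. After the change of variables $x=\pmar(z-1)$, $y=p_{11}/(\pmar)$, both functions become sums over $f\in\{0,1\}^\ell$ of $x^{k_1(f)}y^{\bullet}$, with $\bullet=k_{11}(f)$ for paths and $\bullet=k_{11}^\circ(f)$ for cycles; since $k_{11}(f)\le k_{11}^\circ(f)$ and $y\ge 1$ under positive correlation, the inequality holds termwise. In your matrix language this is the statement $(T_i^{\ell+1})_{00}\le \tr(T_i^\ell)$, which is strictly stronger than $(T_i^{\ell+1})_{00}\le \tr(T_i^{\ell+1})$ and is exactly where positive correlation enters for paths. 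With this substitution your outline goes through and lands on $\ttl/4$.

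A minor point: nonnegative trace and determinant do not by themselves force real eigenvalues; you get reality here because $T_i$ is entrywise nonnegative for $z\ge 1$ (so its discriminant $(T_{00}-T_{11})^2+4T_{01}T_{10}\ge 0$), which you should state.
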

The proof is given in Appendix~\ref{app:gf}.

\begin{lemma}
  \label{lemma:quadratic}
  Let $q_1 \geq 0$ and $q_2 \geq 0$. %$(q_0,q_1,q_2)$ be a probability distribution on $\{0,1,2\}$.
  Then
%Consider the g.f.
  \begin{align}
    \argmin_{z \geq 0} \exp(q_2(z^2-1) + q_1(z-1))z^{-\tau} &\leq \zeta^{\tau}\label{gf-ub}\\
    \zeta = \max \bigg(\sqrt{2} e \frac{q_1}{\tau}, 4e\pth{\frac{q_2}{\tau}}^{1/2}\bigg) .\nonumber
  \end{align}
\end{lemma}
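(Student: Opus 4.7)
The plan is to exhibit a single specific value $z^* \geq 0$ for which
\[
\exp\pth{q_2((z^*)^2-1) + q_1(z^*-1)}(z^*)^{-\tau} \leq \zeta^\tau,
\]
which suffices to bound the minimum (I read the $\argmin$ in the statement as a typo for $\min$). The natural choice, designed so that the polynomial factor $(z^*)^{-\tau}$ contributes exactly $(\zeta/e)^\tau$, is $z^* = e/\zeta$. With this substitution the desired inequality reduces to $\exp\pth{q_2((z^*)^2-1) + q_1(z^*-1)} \leq e^\tau$, and since $-q_1 - q_2 \leq 0$ it is enough to verify the cleaner inequality $q_1 z^* + q_2 (z^*)^2 \leq \tau$.

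I would then split into two cases according to which term attains the maximum defining $\zeta$. In the first case $\zeta = \sqrt{2}\,e\, q_1/\tau$ dominates, which is equivalent to $q_2 \leq q_1^2/(8\tau)$; here $z^* = \tau/(\sqrt{2}\,q_1)$, giving $q_1 z^* = \tau/\sqrt{2}$ and $q_2(z^*)^2 = q_2 \tau^2/(2q_1^2) \leq \tau/16$. In the second case $\zeta = 4e\sqrt{q_2/\tau}$ dominates, equivalent to $q_1^2 \leq 8 q_2 \tau$; here $z^* = \tfrac{1}{4}\sqrt{\tau/q_2}$, so $q_2(z^*)^2 = \tau/16$ and, using $q_1 \leq 2\sqrt{2}\sqrt{q_2\tau}$, we get $q_1 z^* \leq 2\sqrt{2}\sqrt{q_2\tau}\cdot \tfrac{1}{4}\sqrt{\tau/q_2} = \tau/\sqrt{2}$. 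In either case $q_1 z^* + q_2 (z^*)^2 \leq (1/\sqrt{2} + 1/16)\tau < \tau$, which is what is needed.

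I do not anticipate any substantial obstacle: the lemma is essentially a Chernoff-style optimization, and the (somewhat loose) constants $\sqrt{2}$ and $4$ in the definition of $\zeta$ are precisely what provide enough slack so that the single uniform choice $z^* = e/\zeta$ handles both regimes at once. A sharper bound, with constants closer to those one would obtain by separately minimizing the two exponentials at their individual optima $z = \tau/(2q_1)$ and $z = \sqrt{\tau/(2q_2)}$, would require choosing $z^*$ as a joint function of $q_1$ and $q_2$ solving the full first-order condition $2q_2 z + q_1 = \tau/z$; that level of optimization is not needed in the subsequent application.
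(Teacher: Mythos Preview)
Your argument is correct (modulo the trivial edge case $q_1=q_2=0$, where $\zeta=0$ and the bound holds as an infimum), but it differs from the paper's proof. The paper follows the standard Chernoff-optimization route you describe at the end of your proposal: it solves the first-order condition $2q_2 z^2 + q_1 z - \tau = 0$ exactly, obtaining $z^* = 2\tau/(q_1 + \sqrt{q_1^2+8\tau q_2})$, and then derives the two-sided bounds $\tau/\sqrt{q_1^2+8\tau q_2} \le z^* \le \tau/q_1$. The upper bound on $z^*$ together with the stationarity equation is used to show the exponential factor is at most $e^\tau$, and the lower bound gives $(z^*)^{-2} \le q_1^2/\tau^2 + 8q_2/\tau \le \max(2q_1^2/\tau^2, 16 q_2/\tau)$, which is exactly $(\zeta/e)^2$.

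Your approach sidesteps the quadratic entirely by exhibiting the explicit test point $z^* = e/\zeta$. This is more elementary---no closed-form root, no two-sided estimate on the optimizer---and the case split on which term of $\zeta$ dominates plays the same role that the inequality $a+b \le 2\max(a,b)$ plays in the paper. Both arguments exploit the same slack in the constants $\sqrt{2}$ and $4$; the paper uses that slack to absorb the gap between the sum $q_1^2/\tau^2 + 8q_2/\tau$ and the max, while you use it to keep $q_1 z^* + q_2 (z^*)^2$ strictly below $\tau$ at a non-optimal point.
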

%This follows from standard Chernoff bound methods.
The proof is given in Appendix~\ref{app:quadratic}.

\begin{lemma}
  \label{lemma:xi-ub}
  If $p$ satisfies conditions \eqref{density}, \eqref{sparsity}, and \eqref{correlation} and $k \geq \Omega(np_{11})$, then 
  \begin{equation}
    \max_{d \geq 1}^n \max_{\mu \in \mcM(\mu^*,d)} \frac{1}{d} \log \Pr[\bfM_{\mu,\mu^*} \geq kd] \leq - \omega(\log n)
  \end{equation}

\end{lemma}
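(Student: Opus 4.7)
The plan is to combine a Chernoff--Markov inequality with the moment generating function bound of Lemma~\ref{lemma:A-ub}, then optimize the resulting tail bound. For any $z \geq 1$, Markov's inequality on $z^{\bfM_{\mu,\mu^*}}$ gives $\Pr[\bfM_{\mu,\mu^*} \geq kd] \leq z^{-kd}\gfM(z)$. Substituting Lemma~\ref{lemma:A-ub}, using the trivial bounds $t_{2,1}^\circ \leq d/2$ and $\ttl \leq d(n-1)$, and dividing by $d$ yields
\[
\frac{1}{d}\log\Pr[\bfM_{\mu,\mu^*} \geq kd] \leq \phi(v) := -\frac{k}{2}\log(1+v) + av + bv^2,
\]
where $v = z^2-1$, $a \leq p_{11}/2 + n\pmar/2$, and $b = np_{11}^2/4$. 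Since these coefficients scale linearly in $d$, the bound is uniform over $d \geq 1$ and $\mu \in \mcM(\mu^*, d)$.

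Expanding $\pmar = p_{11}^2 + p_{11}(p_{10}+p_{01}) + p_{10}p_{01}$ and invoking~\eqref{correlation} gives $n\pmar \leq np_{11}^2 + o(np_{11})$; combined with~\eqref{sparsity} (which forces $np_{11}^2 \leq np_{11}/(8e^3)$), both $a$ and $b$ are bounded by small constant multiples of $np_{11}$, and one checks that $a = O(np_{11}^2)$ and $b = \Theta(np_{11}^2)$. I would then minimize $\phi$ over $v \geq 0$. Setting $\phi'(v)=0$ gives $k = 2(1+v)(a+2bv)$; in the regime where the quartic dominates, the optimum lies at $v^\star \approx \sqrt{k/(4b)}$, and one verifies via~\eqref{sparsity} that $av^\star = O(\sqrt{p_{11}})\cdot bv^{\star 2}$, so the $av^\star$ term does not spoil the exponent. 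Substituting then gives $\phi(v^\star) \lesssim -\frac{k}{4}\log(k/(4b)) + O(k)$. With $k = \Theta(np_{11})$ and $b = \Theta(np_{11}^2)$, $k/(4b) = \Theta(1/p_{11})$, so $\log(k/(4b)) = \log(1/p_{11}) + O(1) = \log(n/(np_{11})) + O(1)$, and the exponent becomes $-\Theta(np_{11}\log(n/(np_{11})))$. Equivalently, one can apply Lemma~\ref{lemma:quadratic} after linearizing the quartic by truncating to $z^2 - 1 \leq M$ with $M \asymp \sqrt{1/p_{11}}$ (chosen so that $v^\star$ lies inside the truncation); the same exponent results.

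Finally, I would apply the elementary observation that $g\log(n/g) = \omega(\log n)$ whenever $g := np_{11} \to \infty$ (condition~\eqref{density}) and $g \leq n/e$ (ensured by~\eqref{sparsity}): if $g = o(\log n)$, then $\log(n/g) \sim \log n$ and $g\log(n/g)/\log n \sim g \to \infty$; if $g = \Omega(\log n)$ but $g/\log n$ is bounded, then $g\log(n/g) \sim g\log n = \omega(\log n)$; and if $g/\log n \to \infty$, sparsity gives $\log(n/g) \geq \log(8e^3)$ and $g\log(n/g) \geq 5g = \omega(\log n)$. This delivers the claimed bound $\frac{1}{d}\log\Pr[\bfM_{\mu,\mu^*} \geq kd] \leq -\omega(\log n)$.

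The main technical obstacle is the quartic term $bv^2$: Lemma~\ref{lemma:quadratic} directly handles only terms of degree at most $(z^2-1)$, while the Chernoff optimum lies in a regime where $bv^2$ is comparable to or dominates the linear part. Simply discarding or crudely bounding the quartic gives a bound of order $-\Theta(k) = -\Theta(np_{11})$, which under~\eqref{density} alone is only $-\omega(1)$, not $-\omega(\log n)$. The key insight is that carrying the quartic through the optimization (or choosing the truncation parameter $M$ to grow like $1/\sqrt{p_{11}}$) produces the extra $\log(1/p_{11}) = \log(n/(np_{11}))$ factor, and it is exactly this factor that, together with $g \to \infty$, turns the exponent into $\omega(\log n)$.
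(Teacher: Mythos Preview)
Your overall strategy—Chernoff plus Lemma~\ref{lemma:A-ub}, then optimize in $z$—is exactly the paper's. The gap is in the bookkeeping of the linear coefficient. Your claim that $a = O(np_{11}^2)$ is false. From~\eqref{correlation} you correctly obtain $n(\pmar - p_{11}^2) \leq np_{11}\cdot n^{-\Omega(1)}$, but this is \emph{not} $O(np_{11}^2)$ in general: the exponent hidden in $n^{-\Omega(1)}$ may be tiny (say $n^{-0.01}$), while $p_{11}$ can be as small as, e.g., $(\log\log n)/n$. In that regime your $a$ is dominated by the term $n(\pmar-p_{11}^2)/2 \sim n^{1-\epsilon}p_{11}$, and at your chosen $v^\star = \Theta(p_{11}^{-1/2})$ one gets $av^\star = \Theta(n^{1-\epsilon}\sqrt{p_{11}})$, which can be polynomially large in $n$ and swamps the intended negative term $-\Theta(np_{11}\log(1/p_{11}))$. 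So your optimization only covers the ``quartic-dominant'' regime and breaks when the linear term is the bottleneck.

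The paper fixes this with a small algebraic regrouping before invoking Lemma~\ref{lemma:quadratic}. Writing $p_{11}^2(z^2-1)^2 = p_{11}^2(z^4-1) - 2p_{11}^2(z^2-1)$, the bound from Lemma~\ref{lemma:A-ub} becomes $q_1(w-1)+q_2(w^2-1)$ in $w=z^2$, with $q_2 = \tfrac{\ttl}{4}p_{11}^2$ and, crucially, $q_1 = t_{2,1}^\circ p_{11} + \tfrac{\ttl}{2}(\pmar - p_{11}^2)$ rather than $\tfrac{\ttl}{2}\pmar$. Now $q_1/\tau \leq O(1/n) + (\pmar-p_{11}^2)/p_{11} \leq n^{-\Omega(1)}$ is genuinely small, and Lemma~\ref{lemma:quadratic} handles both regimes through the two branches of the $\max$ defining $\zeta$: the $q_1$-branch yields $\tfrac{k}{2}\log(\tau/q_1) \geq \Omega(np_{11})\cdot\Omega(\log n) = \omega(\log n)$, and the $q_2$-branch yields $\tfrac{k}{4}\log(\tau/q_2) \geq \Omega(np_{11})\log(\Theta(1/p_{11}))$, which is your $g\log(n/g)$ quantity and is shown to be $\omega(\log n)$ by exactly the case analysis you gave. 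The point you missed is that subtracting off the $p_{11}^2$ piece from the linear coefficient is what makes the linear-dominant case tractable; without it, no single choice of $v^\star$ works uniformly.
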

The proof is given in Appendix~\ref{app:xi-ub}.
\begin{theorem}[\cite{luczak_size_1991} Theorem 2]
  \label{thm:k-core-size}
  Let $c = c(n) = (n-1)p(n)$.
  For every $\epsilon >0$ there is a constant $d$, such that for all $c(n) > d$ and $k(n) \leq c - c^{\frac{1}{2} + \epsilon}$, has the size of the $k$-core of a graph $G \sim G(n,p)$ is at least $n - n \exp(-c^{\epsilon})$ with probability $1-o(1)$.
\end{theorem}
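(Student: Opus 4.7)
The plan is to bound the number of vertices outside the $k$-core via an analysis of the standard peeling process: iteratively remove any vertex of current degree less than $k$ until none remain; the remaining vertices form the $k$-core. The goal is to show the total number of vertices ever peeled is at most $n \exp(-c^\epsilon)$ with probability $1-o(1)$.

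First I would set up a local branching-process comparison. Fix a vertex $v$. In $G(n,p)$ with $p = c/(n-1)$, the degree of $v$ is $\operatorname{Bin}(n-1, c/(n-1))$, which is well approximated by $\operatorname{Poi}(c)$. For $v$ to be peeled, one needs to exhibit a rooted tree-like structure witnessing iterated low-degree events: at level zero, $v$ itself has fewer than $k$ ``surviving'' neighbors; at each deeper level, its surviving children have fewer than $k-1$ surviving children of their own (since one edge goes back to the parent). By a standard coupling, for fixed depth $h = h(n)$ growing slowly, the relevant neighborhoods of $v$ in $G(n,p)$ are jointly stochastically dominated by a Galton-Watson tree with $\operatorname{Poi}(c)$ root offspring and $\operatorname{Poi}(c)$ subsequent offspring, provided we explore only $o(\sqrt{n})$ vertices so that the ``without replacement'' effect is negligible.

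Second, I would apply a Chernoff/Poisson tail bound to estimate the probability of being peeled. The key inequality is
\begin{equation*}
\Pr[\operatorname{Poi}(c) < k] \leq \Pr[\operatorname{Poi}(c) \leq c - c^{1/2+\epsilon}] \leq \exp(-\tfrac{1}{2} c^{2\epsilon}(1+o(1))).
\end{equation*}
Because the branching-process survival recursion contracts geometrically once the one-step failure probability is that small, the probability that $v$ is eventually peeled is dominated by the one-step probability, namely $\exp(-\Omega(c^{2\epsilon}))$, which is at most $\exp(-c^\epsilon)$ for large $c$. Summing over all $v$ gives expected complement size at most $n \exp(-c^\epsilon)$.

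Third, I would promote the first-moment bound to high probability. The cleanest route is the differential-equation / fluid-limit method of Wormald applied to the peeling process, tracking the number of vertices of each degree as peeling proceeds; alternatively, one can use the fact that the size of the complement is a vertex-exposure martingale with bounded differences and apply Azuma. Either gives concentration around the mean $n\exp(-c^\epsilon)$ with a deviation $o(n\exp(-c^\epsilon))$ that preserves the stated bound.

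The main obstacle is making the local branching comparison rigorous when $k$ is close to $c$: the survival probability of the Galton-Watson tree has a discontinuous transition at $k \approx c$, and one must show that the ``slack'' $c^{1/2+\epsilon}$ is enough to stay well inside the subcritical regime at every level of the exploration. This is exactly where the exponent $\tfrac{1}{2}+\epsilon$ (rather than just $\tfrac{1}{2}$) enters, and it is also where careful bookkeeping of the depleting vertex pool versus the Poisson idealization is needed.
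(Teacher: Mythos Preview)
This theorem is not proved in the paper at all: it is quoted verbatim as Theorem~2 of \L{}uczak's paper \cite{luczak_size_1991} and used as a black box in the proof of Theorem~\ref{thm:main}. There is therefore no ``paper's own proof'' to compare your proposal against.

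As for the proposal itself, it is a reasonable high-level sketch of how results of this type are proved, and \L{}uczak's original argument does proceed via an analysis of the peeling process. However, your sketch has some soft spots if it were to be fleshed out. The Azuma route in your third step is problematic: the size of the $k$-core is not a Lipschitz function of single-vertex (or single-edge) exposures with a small enough constant to beat the target deviation, since removing one vertex can cascade through the peeling process. The differential-equation method is closer to what is actually needed, but it requires verifying the hypotheses of Wormald's theorem for a system tracking degree counts, which is nontrivial here because $k$ may grow with $n$. Also, your branching-process comparison only controls a bounded-depth exploration, whereas peeling can in principle run for many rounds; you would need an additional argument (e.g., that with high probability peeling terminates within your depth $h$, or a direct bound on the total number of deletions via a combinatorial witness) to close this gap. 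None of these are fatal to the overall strategy, but they are the places where real work would be required beyond what you have written.
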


\begin{proof}[Proof of Theorem~\ref{thm:main}]
  First we will show that $\bfG_a \wedge_{\bfmu} \bfG_b$ has a large $k$-core, so there is some $\hat{\mu} \subseteq \bfmu$ such that $\hat{\mu}$ is a $k$-core alignment and $|\hat{\mu}| \geq n(1-o(1))$.
  We have $(\bfG_a \wedge_{\bfmu} \bfG_b) \sim G(n,p_{11})$ and $n p_{11} \geq \omega(1)$.
  From the application of Theorem~\ref{thm:k-core-size} with $\epsilon = \frac{1}{4}$, for $k = n p_{11}(1-(np_{11})^{-\frac{1}{4}}) \geq n p_{11}(1 - o(1))$, $G_{\bfmu}$ has a $k$-core of size
  \[
    n(1-\exp(-(np_{11})^{\frac{1}{4}})) \geq n(1-e^{-\omega(1)}) \geq n(1-o(1))
  \]
  with probability $1-o(1)$.

  Now we will show that $\hat{\mu} \subseteq \mu^*$, i.e. every vertex pair in $\hat{\mu}$ is correct.
  From Lemma~\ref{lemma:union-bound}, the probability of error is at most $\exp(n^2\xi) - 1$ and from Lemma~\ref{lemma:xi-ub} we have $\xi \leq n^{-\omega(1)}$, so the probability that $\hat{\bfmu} \not \subseteq \bfmu$ is $o(1)$.
\end{proof}
\section{Converse}
\label{section:converse}
Recall from Section~\ref{section:estimation} that when we are trying to estimate a subset of $\bfmu$, the quality of a partial matching $\mu'$ depends on the list of bijections that extend it.

\begin{definition}
  Let $\bfX$ and $\bfY$ be random variables on $\mcX$ and $\mcY$ respectively.
  A list estimator for $\bfY$ given $\bfX$ is a function $S : \mcX \to 2^{\mcY}$.
  The estimator succeeds when $\bfY \in S(\bfX)$. 
\end{definition}

\begin{lemma}
  \label{lemma:recip}
  Let $\bfY$ be a random variable on a finite set $\mcY$ with distribution $P_Y$.
  Let $S \subseteq \mcY$ such that $|S| = \ell$.
  Then
  \begin{equation}
    \Pr[\bfY \in S] \leq \E\left[\min\pth{1,\frac{\ell}{|\{y' \in \mcY : P_Y(y') \geq P_Y(\bfY)\}|}}\right] \label{recip}
  \end{equation}
\end{lemma}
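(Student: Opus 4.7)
The plan is to reduce to the extremal case by rearrangement, then verify the resulting identity by grouping $\mcY$ into the level sets of $P_Y$.

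First, since the right-hand side of \eqref{recip} does not depend on $S$, it suffices to prove the inequality for the set $S$ of size $\ell$ that maximizes $\Pr[\bfY \in S]$. By the obvious rearrangement, this optimal $S$, which I will call $S^*$, consists of the $\ell$ elements of $\mcY$ with the largest $P_Y$-values (with ties broken arbitrarily).

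Next I would group $\mcY$ into the level sets $L_1, L_2, \ldots, L_m$ of $P_Y$, where $L_j$ is the set of $y$ attaining the $j$-th largest value $p_j$; set $s_j = |L_j|$ and $S_j = s_1 + \cdots + s_j$, and let $j^*$ be the smallest index with $S_{j^*} \geq \ell$. The key observation is that for every $y \in L_j$ the quantity $N(y) := |\{y' \in \mcY : P_Y(y') \geq P_Y(y)\}|$ equals $S_j$, so both sides of \eqref{recip} become explicit sums over the level sets:
\[
\Pr[\bfY \in S^*] = \sum_{j < j^*} s_j p_j + (\ell - S_{j^* - 1})\, p_{j^*},
\qquad
\E\left[\min\left(1, \tfrac{\ell}{N(\bfY)}\right)\right] = \sum_{j < j^*} s_j p_j + \sum_{j \geq j^*} s_j p_j \cdot \frac{\ell}{S_j},
\]
using that $\ell/S_j > 1$ for $j < j^*$ and $\ell/S_j \leq 1$ for $j \geq j^*$.

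Finally, I would subtract these two expressions and isolate the $j = j^*$ term in the right-hand side; using $s_{j^*} = S_{j^*} - S_{j^* - 1}$, the difference collapses to a non-negative multiple of $(S_{j^*} - \ell)$ plus a trailing non-negative sum $\sum_{j > j^*} s_j p_j \cdot \ell/S_j$, which finishes the proof. The only mild obstacle in the whole argument is the bookkeeping around ties in $P_Y$: if the probabilities were all distinct, the claim would reduce immediately to the telescoping inequality $\sum_{i=1}^\ell p_i \leq \sum_{i=1}^\ell p_i + \ell \sum_{i > \ell} p_i / i$, and the level-set argument above is simply the tie-aware version of this observation.
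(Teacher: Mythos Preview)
Your proposal is correct and follows essentially the same approach as the paper: both reduce to the optimal $S^*$, partition $\mcY$ into level sets of $P_Y$, and compare the two sides level by level, with the only nontrivial inequality occurring at the threshold level (your $j^*$). The paper phrases this as bounding the fraction $|\mcY_j \cap S^*|/|\mcY_j|$ by $\ell/S_j$ for each $j$, while you compute both sides explicitly and subtract, but the algebraic content at $j=j^*$ is identical.
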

\begin{proof}
  Let $(p_0,p_1,\ldots)$ be the list of distinct probabilities that appear in $P_Y$, sorted from largest to smallest.
  Let $\mcY_i = \{y \in \mcY : P_Y(y) = p_i\}$.
  Let $S^*$ be a set of size $\ell$ that maximizes $\Pr[\bfY \in S]$.
  
  If $\sum_{i=0}^j |\mcY_i| \leq \ell$, then $\mcY_j \subseteq S^*$.
  If $\sum_{i=0}^j |\mcY_i| > \ell$, then $|\mcY_j \cap S^*| = \max(0,\ell - \sum_{i=0}^{j-1} |\mcY_i|)$.
  We have the inequality
  \begin{equation*}
     \pth{\ell - \sum_{i=0}^{j-1} |\mcY_i|}\pth{\sum_{i=0}^{j} |\mcY_i|}
    = \ell|\mcY_j| + \pth{\ell - \sum_{i=0}^{j} |\mcY_i|}\pth{\sum_{i=0}^{j-1} |\mcY_i|}
    \leq \ell|\mcY_j|,
  \end{equation*}
  so the fraction of $\mcY_j$ appearing in $S^*$ is
  \[
    \frac{|\mcY_j \cap S^*|}{|\mcY_j|} = \frac{\max(0,\ell - \sum_{i=0}^{j-1} |\mcY_i|)}{|\mcY_j|} \leq \frac{\ell}{\sum_{i=0}^{j} |\mcY_i|}.
  \]
  Observe that for $y \in \mcY_j$, $\{y' \in \mcY : P_Y(y') \geq P_Y(y)\} = \bigcup_{i=0}^j \mcY_i$.
  Thus for all $j$, the fraction of $\mcY_j$ appearing in $S^*$ is at most as large as the contribution of $\mcY_j$ to the right side of \eqref{recip}.
%  Let $p^* = \max_{y \in \mathcal{Y}} P_Y(y)$ and let $\mcY^* \subset \mcY$ contain the points with probability equal to $p^*$.
%  Then the contribution of $\mcY^*$ to the right side of \eqref{recip} is $|\mcY^*|p^*|\mcY^*|^{-1} = p^*$.  
\end{proof}

Let $\mu \subseteq \mcU_a \times \mcU_b$ be a matching and let $\pi \subseteq \mu \times \mu$ be a bijection.
Then we can extract another matching $\gamma(\pi) \subseteq \mcU_a \times \mcU_b$ as follows.
Observe that
\[
  \pi \subseteq (\mu \times \mu) \subseteq (\mcU_a \times \mcU_b) \times (\mcU_a \times \mcU_b)
\]
and define $\gamma(\pi) = \{(u_a,v_b) : ((u_a,u_b),(v_a,v_b)) \in \pi \}$.

Cullina and Kiyavash proved the following fact.
\begin{lemma}
  \label{lemma:intersection}
  Let $G_a$ and $G_b$ be graphs, let $\mu$ be a matching between their vertex sets, and let $\frac{p_{01}p_{10}}{p_{11}p_{00}} < 1$.
  For all $\pi \in Aut(G_a \wedge_{\mu} G_b)$,
  \begin{equation*}
    \Pr[\bfmu = \gamma(\pi) | (\bfG_a,\bfG_b) = (G_a,G_b)]
    \geq \Pr[\bfmu = \mu | (\bfG_a,\bfG_b) = (G_a,G_b)]
  \end{equation*}
\end{lemma}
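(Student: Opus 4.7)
The plan is to reduce the lemma to an elementary combinatorial inequality and then discharge it in one line. Since $\bfmu$ is uniform over bijections, Bayes' theorem combined with the likelihood formula from Section~\ref{section:estimation} gives, for any two bijections $\mu_1,\mu_2$ of $\mcU_a$ and $\mcU_b$,
\[
  \frac{\Pr[\bfmu = \mu_2 \mid (\bfG_a,\bfG_b) = (G_a,G_b)]}{\Pr[\bfmu = \mu_1 \mid (\bfG_a,\bfG_b) = (G_a,G_b)]}
  = \pth{\frac{p_{11}p_{00}}{p_{10}p_{01}}}^{|E(G_a \wedge_{\mu_2} G_b)| - |E(G_a \wedge_{\mu_1} G_b)|}.
\]
Under the hypothesis $\frac{p_{01}p_{10}}{p_{11}p_{00}} < 1$ the base exceeds $1$, so it suffices to show the combinatorial statement $|E(G_a \wedge_{\gamma(\pi)} G_b)| \geq |E(G_a \wedge_{\mu} G_b)|$ for every $\pi \in \Aut(G_a \wedge_{\mu} G_b)$, after first observing that $\gamma(\pi)$ is itself a bijection: this is immediate from $\pi$ being a permutation of $\mu$ and $\mu$ being a bijection.

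Next I would rewrite the automorphism $\pi$ in a convenient form. Since $\pi$ permutes the vertex set $\mu \subseteq \mcU_a \times \mcU_b$ and $\mu$ is a matching, a short check shows it factors as $\pi(u_a,u_b) = (\sigma_a(u_a),\sigma_b(u_b))$ for bijections $\sigma_a$ on $\mcU_a$ and $\sigma_b$ on $\mcU_b$ satisfying $\sigma_b \circ \mu = \mu \circ \sigma_a$. I would then label the $n$ pairs of $\mu$ by $[n]$ via a fixed enumeration, so that $\sigma_a$ and $\sigma_b$ pull back to a single permutation of $[n]$, which I will also call $\pi$. Let $S_a,S_b \subseteq \binom{[n]}{2}$ be the pullbacks of $E(G_a)$ and $E(G_b)$ to this common index set. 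Then $E(G_a \wedge_{\mu} G_b) = S_a \cap S_b$, and unpacking the definition of $\gamma(\pi)$ (which pairs $u_{a,i}$ with $u_{b,\pi(i)}$) yields $E(G_a \wedge_{\gamma(\pi)} G_b) = S_a \cap \pi^{-1}(S_b)$. In the same language, the automorphism hypothesis becomes the set-theoretic statement $\pi(S_a \cap S_b) = S_a \cap S_b$.

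The final step is a one-line inclusion: from $\pi(S_a \cap S_b) = S_a \cap S_b$ we get $S_a \cap S_b \subseteq \pi(S_a)$, and combining with the trivial $S_a \cap S_b \subseteq S_b$ yields $S_a \cap S_b \subseteq \pi(S_a) \cap S_b$, so
\[
  |S_a \cap S_b| \;\leq\; |\pi(S_a) \cap S_b| \;=\; |S_a \cap \pi^{-1}(S_b)|,
\]
which is exactly the edge-count inequality we needed. I do not expect any real obstacle here; the only care required is bookkeeping in the reduction to a permutation of $[n]$, in particular ensuring that the $\mcU_a$-side and $\mcU_b$-side edge sets are transported consistently so that the automorphism condition really says $\pi$ stabilizes the intersection $S_a \cap S_b$ rather than the individual sets $S_a$ and $S_b$ (indeed, $\pi$ typically does not preserve either one separately, and this is precisely why $\gamma(\pi)$ can yield \emph{strictly} more intersection-edges than $\mu$).
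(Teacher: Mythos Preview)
Your argument is correct. The reduction to the edge-count inequality via the likelihood formula is exactly what is needed, and your one-line set-theoretic step $S_a \cap S_b = \pi(S_a \cap S_b) \subseteq \pi(S_a) \cap S_b$ is clean and valid. One small omission: the lemma as stated allows $\mu$ to be any matching, not necessarily a bijection; but if $|\mu| < n$ then $\Pr[\bfmu = \mu \mid (\bfG_a,\bfG_b)=(G_a,G_b)] = 0$ and the inequality is vacuous, so your tacit restriction to bijections is harmless. Also, the detour through the factorization $\pi = (\sigma_a,\sigma_b)$ is not really needed: once you enumerate $\mu$ by $[n]$, the automorphism $\pi$ of the intersection graph is already a permutation of $[n]$, and the rest follows.

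As for comparison with the paper: the paper does not actually prove this lemma here but simply invokes two lemmas from \cite{cullina_exact_2017}. Your self-contained combinatorial proof is therefore a genuine addition rather than a paraphrase; it is short, uses nothing beyond the likelihood identity already recorded in Section~\ref{section:estimation}, and makes transparent why the automorphism condition is exactly the right hypothesis (it forces $\pi$ to stabilize $S_a \cap S_b$, which is all that is needed).
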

\begin{proof}
  This follows immediately by combining Lemma II.2 and Lemma V.1 of \cite{cullina_exact_2017}.
\end{proof}

\begin{theorem}
  \label{thm:list-converse}
  Let $\bfS = S(\bfG_a,\bfG_b)$ be a list estimator for $\bfmu$.
  Then
  \[
    \Pr[\bfmu \in \bfS \wedge |\bfS| \leq \ell] \leq \E\left[\min\pth{1,\ell |\Aut(\bfG_a \wedge_{\bfmu} \bfG_b)|^{-1}} \right]
  \]  
\end{theorem}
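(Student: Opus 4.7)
The plan is to condition on the observation $(\bfG_a, \bfG_b) = (G_a, G_b)$, apply the generic list-decoding bound of Lemma~\ref{lemma:recip} to the conditional distribution of $\bfmu$, and then use Lemma~\ref{lemma:intersection} to replace the resulting level-set size by $|\Aut(G_a \wedge_{\mu} G_b)|$.

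First I would reduce to the case $|\bfS| = \ell$: on the event $|\bfS| \leq \ell$ the list $\bfS$ can be padded arbitrarily up to size exactly $\ell$, which only enlarges $\Pr[\bfmu \in \bfS]$ and removes the need to carry the size constraint explicitly. Fix $(G_a, G_b)$: then $\bfS = S(G_a, G_b)$ is deterministic, while $\bfmu$ has conditional distribution $P_{\bfmu \mid G_a, G_b}$ supported on the bijections between $\mcU_a$ and $\mcU_b$. Applying Lemma~\ref{lemma:recip} with $\bfY = \bfmu$ and the (now fixed) set $\bfS$ of size $\ell$ yields
\[
  \Pr[\bfmu \in \bfS \mid G_a, G_b]
  \leq \E\!\left[\min\!\left(1, \frac{\ell}{|N(\bfmu, G_a, G_b)|}\right)\,\bigg|\, G_a, G_b\right],
\]
where $N(\mu, G_a, G_b) = \{\mu' : P_{\bfmu \mid G_a, G_b}(\mu') \geq P_{\bfmu \mid G_a, G_b}(\mu)\}$.

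The heart of the argument is the deterministic inequality $|N(\mu, G_a, G_b)| \geq |\Aut(G_a \wedge_{\mu} G_b)|$, which I would prove by exhibiting an injection $\pi \mapsto \gamma(\pi)$ from $\Aut(G_a \wedge_{\mu} G_b)$ into $N(\mu, G_a, G_b)$. Membership in $N$ is exactly the content of Lemma~\ref{lemma:intersection}, whose positive-correlation hypothesis is assumed in the converse via \eqref{positive-correlation}. Injectivity is a short bookkeeping check: for each $(u_a, v_b) \in \gamma(\pi)$, the matching $\mu$ uniquely determines $u_b$ as the partner of $u_a$ in $\mu$ and $v_a$ as the partner of $v_b$ in $\mu$, so the pair $((u_a,u_b),(v_a,v_b)) \in \pi$ is recoverable from its image, and hence so is $\pi$ itself.

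Substituting this inequality into the conditional bound and taking the outer expectation over $(\bfG_a, \bfG_b)$ delivers the claimed inequality. I do not anticipate a serious obstacle: Lemmas~\ref{lemma:recip} and~\ref{lemma:intersection} supply both pieces of heavy lifting, and the remaining work is the identification of the injection $\gamma$ and the assembly of the nested expectation via the tower property.
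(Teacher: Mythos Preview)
Your proposal is correct and follows essentially the same route as the paper's proof: condition on $(G_a,G_b)$, apply Lemma~\ref{lemma:recip} to the posterior, invoke Lemma~\ref{lemma:intersection} to lower-bound the level set by $|\Aut(G_a \wedge_{\mu} G_b)|$, and then average. The only cosmetic differences are that the paper truncates long lists down to size $\ell$ whereas you pad short lists up, and you spell out the injectivity of $\pi \mapsto \gamma(\pi)$ explicitly while the paper leaves it implicit.
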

\begin{proof}
  Suppose that $|S(G_a,G_b)| > \ell$ for some $(G_a,G_b)$.
  Then by instead using a shorter list in these cases, we can create some $S'$ such that $|S'(G_a,G_b)| \leq \ell$ for all $(G_a,G_b)$ and this can only increase $\Pr[\bfmu \in \bfS \wedge |\bfS| \leq \ell]$.
%  Thus we can assume $|S(G_a,G_b)| \leq \ell$ for all $(G_a,G_b)$ without loss of generality.
  
  Applying Lemmas~\ref{lemma:intersection} and \ref{lemma:recip} for all $(G_a,G_b)$ and then averaging over $(\bfG_a,\bfG_b)$ gives the claim.
\bcomment{
  \begin{align*}
    &\eq \Pr[\bfmu \in \bfS | (\bfG_a,\bfG_b) = (G_a,G_b)]\\
    &\leq \ell \max_{\mu} \Pr[\bfmu = \mu | (\bfG_a,\bfG_b) = (G_a,G_b)] \\
    &\leq \E \left[\left. \min\pth{1,\ell |\Aut(G_a \wedge_{\bfmu} G_b)|^{-1}} \right| (\bfG_a,\bfG_b) = (G_a,G_b)\right]
  \end{align*}}
\end{proof}

\begin{lemma}
  \label{lemma:isolated}
  If $\bfG \sim \ER(n,p)$ and $p \leq \mathcal{O}\pth{\frac{1}{n}}$, then $\bfG$ has $\Omega(n)$ isolated vertices with probability $1-o(1)$.
\end{lemma}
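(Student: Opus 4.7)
The plan is a standard second-moment argument on the number of isolated vertices.

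First, let $\bfX = \sum_{v \in V(\bfG)} \bfX_v$, where $\bfX_v$ is the indicator that vertex $v$ is isolated in $\bfG$. Then $\E[\bfX_v] = (1-p)^{n-1}$, so by linearity $\E[\bfX] = n(1-p)^{n-1}$. Using the hypothesis $p \le c/n$ for some absolute constant $c$, for all sufficiently large $n$ we have $(1-p)^{n-1} \ge e^{-2c}$ (via $\log(1-p) \ge -p/(1-p)$, or just using $(1 - c/n)^{n-1} \to e^{-c}$). Hence $\E[\bfX] \ge \alpha n$ for some constant $\alpha > 0$.

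Next I would control the variance. For distinct $v \neq w$,
\[
  \E[\bfX_v \bfX_w] = \Pr[\text{both } v,w \text{ isolated}] = (1-p)^{2n-3},
\]
since these two events jointly depend on the $2n-3$ edges incident to $\{v,w\}$ (the edge $\{v,w\}$ is counted once). Therefore
\[
  \operatorname{Cov}(\bfX_v, \bfX_w) = (1-p)^{2n-3} - (1-p)^{2n-2} = p(1-p)^{2n-3},
\]
and summing gives
\[
  \operatorname{Var}(\bfX) \le \E[\bfX] + n(n-1) p (1-p)^{2n-3} \le n + c\, n \cdot (1-p)^{2n-3} = O(n),
\]
using $p \le c/n$ and $(1-p)^{2n-3} \le 1$.

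Finally, since $\E[\bfX] = \Omega(n)$ and $\operatorname{Var}(\bfX) = O(n)$, Chebyshev's inequality gives
\[
  \Pr\!\left[\bfX \le \tfrac{1}{2}\E[\bfX]\right] \le \frac{4\operatorname{Var}(\bfX)}{\E[\bfX]^2} = O(1/n) = o(1),
\]
so with probability $1 - o(1)$, $\bfX \ge \tfrac{\alpha}{2} n = \Omega(n)$, as claimed.

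No step here is a real obstacle: the only mild care needed is in counting the edges shared by the two isolation events (to avoid an erroneous factor that would blow up the covariance), and in writing out the lower bound on $(1-p)^{n-1}$ cleanly given that $p \le O(1/n)$ is only an upper bound (so $p$ could be as small as zero, which only helps). The whole argument is the textbook isolated-vertex computation for the subcritical/critical \ErRe regime.
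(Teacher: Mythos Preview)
Your argument is correct and is exactly the ``standard use of the second moment method'' that the paper invokes; the paper's own proof only states $\E[\bfX] = n(1-p)^{n-1} = \Omega(n)$ and then defers to the second-moment bound, whereas you have written out the covariance computation and Chebyshev step explicitly.
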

\begin{proof}
  The expected number of isolated vertices is $n(1-p)^{n-1} \geq n e^{\Omega(1)}$.
  By a standard use of the second moment method, with probability $1-o(1)$, the number of isolated vertices is within a factor of $1-o(1)$ of the mean.
\end{proof}
\begin{proof}[Proof of Theorem~\ref{thm:converse}]
  The estimated partial matching $\hat{\mu}$ can be interpreted as a list estimator.
  There are $(n-|\hat{\mu}|)!$ bijections $\mu$ that extend $\hat{\mu}$, i.e. $|\mu| = n$ and $\hat{\mu} \subseteq \mu$.
  Then  
  \begin{align*}
    \Pr[\bfmu \in \bfS \wedge |\bfS| \leq \epsilon(n)!]
    &\leql{a} \E \left[ \min\pth{1,\epsilon(n)! |\Aut(\bfG_a \wedge_{\bfmu} \bfG_b)|^{-1}} \right]\\
    &\leq \Pr \big[|\Aut(\bfG_a \wedge_{\bfmu} \bfG_b)| \leq \epsilon(n)!\big] + \frac{\epsilon(n)!}{|\Aut(\bfG_a \wedge_{\bfmu} \bfG_b)|}\\
    &\leql{b} o(1) + \frac{\epsilon(n)!}{\Omega(n)!}\\
    &\leq o(1).
  \end{align*}
where $(a)$ uses Theorem~\ref{thm:list-converse}, and $(b)$ uses the following argument.
  If a graph $G$ has $j$ isolated vertices, then $|\Aut(G)| \geq j!$.
  The true intersection graph $\bfG_a \wedge_{\bfmu} \bfG_b$ has distribution $\ER(n,p_{11})$, so from Lemma~\ref{lemma:isolated} and \eqref{converse-density}, $|\Aut(\bfG_a \wedge_{\bfmu} \bfG_b)| \geq (\Omega(n))!$ with probability $1-o(1)$. 
%  Step $(b)$ used the following inequality: for a random variable $X$ taking positive integer values and a threshold $x$, $\E [ \min\pth{1,\ell X^{-1}} ] \leq \Pr [ X \leq x] + \ell x^{-1}$.
\end{proof}
\appendix
\section{Proof of Lemma~\ref{lemma:A-ub}}
\label{app:gf}
\subsection{Generating function combinatorics}
\begin{definition}
  Define the following matrices indexed by $\{0,1\} \times \{0,1\}$:
  \[
    P = \twobytwo{p_{00}}{p_{01}}{p_{10}}{p_{11}} \quad
    Z = \twobytwo{1}{1}{1}{z}.
  \]
  For $\ell \geq 1$, define the generating functions
  \begin{align*}
    a_{\ell}(z) &= \onevec^T (PZ)^{\ell} P \onevec\\
    a_{\ell}^{\circ}(z) &= \tr ((PZ)^{\ell}).
  \end{align*}
\end{definition}

\begin{lemma}
  \label{lemma:cycle-path-decomp}
  \[
    \gfM(z) = \prod_{\ell \geq 1} a_{\ell}(z)^{t_{\ell,1}} a_{\ell}^{\circ}(z)^{t_{\ell,1}^{\circ}} a_{\ell}(z^2)^{t_{\ell,2}} a_{\ell}^{\circ}(z^2)^{t_{\ell,2}^{\circ}}
  \]
\end{lemma}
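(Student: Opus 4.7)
The plan is to factor $A_{\mu,\mu^*}(z)$ over the connected components of the multigraph $(\binom{\mcU_a}{2},\binom{\mcU_b}{2},\ell(\mu)+\ell(\mu^*))$ and then to identify each connected component's contribution with one of $a_\ell(z^i)$ or $a^\circ_\ell(z^i)$ via a transfer matrix computation.

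First, I start from the alternative form \eqref{alt-M} of $\bfM_{\mu,\mu^*}$ and determine the multiplicity $|w\cap(\mu\setminus\mu^*)|$ for each $w\in\binom{\mu}{2}$ in terms of the region. If $(\alpha(w),\beta(w))\in\nu_i$ then $\alpha(w)$ contains exactly $i$ vertices in $\alpha(\mu^*\setminus\mu)=\mcU_a\setminus\alpha(\mu\cap\mu^*)$; since $w\in\binom{\mu}{2}$ forces both endpoints to lie in $\alpha(\mu)$, each such endpoint actually lies in $\alpha(\mu\setminus\mu^*)$, so $|w\cap(\mu\setminus\mu^*)|=i$. Consequently,
\[
z^{\bfM_{\mu,\mu^*}}=\prod_{i\in\{0,1,2\}}\prod_{(w_a,w_b)\in\nu_i}(z^i)^{\bfG_a(w_a)\bfG_b(w_b)}.
\]

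Next, I exploit independence across components. The only source of dependence between the edge indicators $\bfG_a(w_a)\bfG_b(w_b)$ and $\bfG_a(w_a')\bfG_b(w_b')$ is a shared pairing in $\ell(\mu^*)$, which happens exactly when they lie in the same connected component of $\ell(\mu)+\ell(\mu^*)$. Since every component lies within a single region, the expectation factors across components and I can analyze one component at a time with $y=z^i$. For a cycle of length $2\ell$ in region $i$, I label the vertices $a_0,b_0,\ldots,a_{\ell-1},b_{\ell-1}$ with $\ell(\mu^*)$-edges $(a_j,b_j)$ and $\ell(\mu)$-edges $(a_{j+1\bmod\ell},b_j)$, set $X_j=\bfG_a(a_j)$, $Y_j=\bfG_b(b_j)$, and summing over $Y_j$ produces the transfer matrix $(PZ)_{X_j,X_{j+1}}=\sum_{Y_j}P_{X_j,Y_j}y^{X_{j+1}Y_j}$. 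Taking the trace around the cycle gives $\tr((PZ)^\ell)=a^\circ_\ell(y)$. A path of length $2\ell+1$ contributes $\ell$ transfer matrix steps flanked by two unsummed marginal vectors $P\one$, yielding $\one^T(PZ)^\ell P\one=a_\ell(y)$.

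Finally, I multiply the per-component factors together and observe that region $0$ contributes trivially: when $z=1$, $Z$ is the all-ones matrix, $(PZ)^\ell=(P\one)\one^T$ has trace $\one^T P\one=1$, and $\one^T(PZ)^\ell P\one=(\one^T P\one)^2=1$, so $a_\ell(1)=a^\circ_\ell(1)=1$. Collecting the $i=1$ and $i=2$ contributions with multiplicities $t_{1,\ell},t^\circ_{1,\ell},t_{2,\ell},t^\circ_{2,\ell}$ gives the claimed product. The only delicate points are verifying the endpoint conventions for paths (they must begin and end with $\ell(\mu^*)$-edges because $\ell(\mu^*)$ is a bijection on $\binom{\mcU_a}{2}$, so any $\ell(\mu)$-endpoint can be extended, forcing $\ell(\mu)$-edges to be interior) and keeping the index conventions of $P$ and $Z$ consistent with the matrix definitions of $a_\ell$ and $a^\circ_\ell$; this is where I expect the bookkeeping to require the most care.
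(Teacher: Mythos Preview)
Your proposal is correct and follows essentially the same approach as the paper: both start from \eqref{alt-M}, use the cycle--path decomposition of $\ell(\mu)+\ell(\mu^*)$ to factor the generating function, and interpret each component's contribution as a transfer-matrix product with $P$ coming from $\ell(\mu^*)$-edges and $Z$ from $\ell(\mu)$-edges. Your write-up is in fact more explicit than the paper's (you spell out why region $0$ contributes trivially, why paths must begin and end with $\ell(\mu^*)$-edges, and how the $PZ$ indexing arises), but the underlying argument is the same.
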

\begin{proof}
  From \eqref{alt-M}, we have
  \[
     \bfM_{\mu,\mu^*} = \sum_{w \in \nu_1} (\bfG_a \wedge_{\mu} \bfG_b)(w) + 2 \sum_{w \in \nu_2} (\bfG_a \wedge_{\mu} \bfG_b)(w).
   \]
   and these two terms are independent.
   From this we obtain
   \begin{multline*}
     \Pr[\bfG_a = G_a,\bfG_b=G_b] z^{M(\mu,\mu^*;G_a,G_b)} =\\
     \prod_{(w_a,w_b) \in \nu_1^*} p_{G_a(w_a),G_b(w_b)} \prod_{(w_a,w_b) \in \nu_2^*} p_{G_a(w_a),G_b(w_b)}
     \prod_{(w_a,w_b) \in \nu_1} z^{G_a(w_a) \wedge G_b(w_b)} \prod_{(w_a,w_b) \in \nu_2} z^{2(G_a(w_a) \wedge G_b(w_b))}
   \end{multline*}
   which is the contribution of a particular graph pair to the generating function for $\bfM_{\mu,\mu^*}$.
   Each $w_a$ or $w_b$ appears at most twice in this expression: once in a factor of $p_{G_a(w_a),G_b(w_b)}$ and up to once in a factor of $z^{(G_a(w_a) \wedge G_b(w_b))}$ or $z^{2(G_a(w_a) \wedge G_b(w_b))}$.
   Thus 
   \[
     \gfM(z) = \sum_{G_a,G_b} \Pr[\bfG_a = G_a,\bfG_b=G_b] z^{M(\mu,\mu^*;G_a,G_b)}
   \]
   factorizes based on the cycle and path decomposition of $\ell(\mu) + \ell(\mu^*)$.
   Because the value of $G_a(w_a)$ is used in at most places, the sum over $G_a(w_a) \in [2]$ can be interpreted as a matrix multiplication.
   The matrix that contributes the factor from $\ell(\mu)$ is $Z$ and the matrix that contributes the factor from $\ell(\mu^*)$ is $P$.  
%   Each cycle of length $\ell$ in $\mu_1$ contributes a factor $a_{\ell}^{\circ}(z)$: the sum of 
\end{proof}

\bcomment{
The random variables associated with both $a_1(z)^2$ and $a_2^{\circ}(z)$ have means of $2\pmar$, but the variance of the latter random variable is much larger.
On the other hand,
\[
  a_1(z)
  &= \onevec^TPZ_0P\onevec + \onevec^TPZ_1P\onevec\\
  &= 1 + \pmar(z-1)\\
\]
so the random variable associated with $a_1^{\circ}$ has a mean of $p_{11}$.
}
\begin{definition}
  For a sequence $f \in \{0,1\}^{\ell}$, let $k_1(f)$ be the number of ones in the sequence, $k_{11}(f)$ be the number of pairs of consecutive ones, and $k_{11}^{\circ}(f)$ be the number of pairs of consecutive ones counting the first and last position as consecutive.
  Define the generating functions
  \begin{align*}
    b_{\ell}(x,y) &= \sum_{f \in \{0,1\}^{\ell}} x^{k_1(f)} y^{k_{11}(f)} \\
    b_{\ell}^{\circ}(x,y) &= \sum_{f \in \{0,1\}^{\ell}} x^{k_1(f)} y^{k_{11}^{\circ}(f)}
  \end{align*}
\end{definition}

\begin{lemma}
  \label{lemma:a-to-b}
  For all $\ell \geq 1$,
  \begin{align*}
    a_{\ell}(z) &= b_{\ell}\pth{\pmar(z-1),\frac{p_{11}}{\pmar}}\\
    a_{\ell}^{\circ}(z) &= b_{\ell}^{\circ}\pth{\pmar(z-1),\frac{p_{11}}{\pmar}}
  \end{align*}
\end{lemma}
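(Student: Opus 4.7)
My plan is to write $Z$ as a rank-one update of the all-ones matrix,
\[
  Z = \onevec \onevec^T + (z-1)\, e e^T, \qquad e = \twobyone{0}{1},
\]
and expand each of the $\ell$ copies of $Z$ in $(PZ)^\ell$ along this sum. This turns both $a_\ell(z) = \onevec^T (PZ)^\ell P \onevec$ and $a_\ell^\circ(z) = \tr((PZ)^\ell)$ into sums indexed by sequences $f \in \{0,1\}^\ell$, where $f_i = 1$ selects the rank-one term at position $i$ (contributing a factor of $z-1$) and $f_i = 0$ selects $\onevec \onevec^T$. Since every inserted factor is rank-one, the remaining products of $P$'s interleaved with $v_i u_i^T$ (for $v_i, u_i \in \{\onevec, e\}$) collapse into a chain of scalar inner products involving only the four values $\onevec^T P \onevec = 1$, $\onevec^T P e = p_{*1}$, $e^T P \onevec = p_{1*}$, and $e^T P e = p_{11}$.

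The next step is a simple bookkeeping argument: determine how often each of these four scalars occurs for a given $f$. For $a_\ell$, the outer $\onevec$'s act like boundary zeros, which I encode by extending to $f_0 = f_{\ell+1} = 0$; for $a_\ell^\circ$ the trace identifies position $\ell+1$ with position $1$ cyclically. In either case, the scalar contributed at position $i$ is determined by the adjacent pair $(f_i, f_{i+1})$, so letting $n_{ij}$ count adjacent pairs of type $(i,j)$ under the appropriate (linear or cyclic) adjacency, the contribution of $f$ is
\[
  (z-1)^{k_1(f)} \cdot p_{*1}^{n_{01}}\, p_{1*}^{n_{10}}\, p_{11}^{n_{11}}.
\]
Using $n_{01} = n_{10}$ and $n_{10} = k_1(f) - n_{11}$, and identifying $n_{11}$ with $k_{11}(f)$ in the path case or with $k_{11}^\circ(f)$ in the cyclic case, this simplifies to $(\pmar(z-1))^{k_1(f)} (p_{11}/\pmar)^{k_{11}(f)}$ (resp.\ with $k_{11}^\circ$). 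Summing over $f$ then matches the defining sums of $b_\ell$ and $b_\ell^\circ$ evaluated at $(\pmar(z-1), p_{11}/\pmar)$.

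The one place that calls for real care, rather than being routine, is the cyclic case: I need to verify that the trace does yield the cyclic product $(u_\ell^T P v_1)\prod_{i=1}^{\ell-1}(u_i^T P v_{i+1})$, so that the wrap-around adjacency $(f_\ell, f_1)$ genuinely contributes $p_{11}$ when both endpoints are $1$. This is what singles out $k_{11}^\circ$ rather than $k_{11}$ as the counter for $a_\ell^\circ$, and is the reason the path case is the cleaner of the two: there the boundary $\onevec$'s pin $f_0$ and $f_{\ell+1}$ to $0$, which automatically prevents any wrap-around $p_{11}$ factor from appearing and leaves $k_{11}$ governing $a_\ell$.
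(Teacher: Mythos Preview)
Your proposal is correct and takes essentially the same approach as the paper: both split $Z = \onevec\onevec^T + (z-1)ee^T$ (the paper writes this as $Z = Z_0 + Z_1$) and expand the $\ell$-fold product over $f \in \{0,1\}^\ell$, using the rank-one structure to reduce everything to products of the four scalars $\onevec^T P \onevec$, $\onevec^T P e$, $e^T P \onevec$, $e^T P e$. The only cosmetic difference is bookkeeping: the paper counts runs of consecutive ones (each run of length $j$ contributing $(z-1)^j p_{*1} p_{11}^{j-1} p_{1*}$), whereas you count adjacent pairs $n_{ij}$ and use $n_{01}=n_{10}=k_1(f)-n_{11}$; these are equivalent, and your treatment of the cyclic case is in fact more explicit than the paper's ``follows analogously.''
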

\begin{proof}
  Let $Z_0 = \twobytwo{1}{1}{1}{1}$ and $Z_1 = \twobytwo{0}{0}{0}{z-1}$. Then
  \begin{align*}
    a_{\ell}(z)
    &= \onevec^T (PZ)^{\ell} P \onevec\\
    &= \onevec^T (P(Z_0+Z_1))^{\ell} P \onevec\\
    &= \sum_{f \in \{0,1\}^{\ell}} \onevec^T \bigg(\prod_{i \in [\ell]} PZ_{f(i)}\bigg) P \onevec\\
    &\eql{a} \sum_{f \in \{0,1\}^{\ell}} (\pmar(z-1))^{k_1(f)} \pth{\frac{p_{11}}{\pmar}}^{k_{11}(f)}.
  \end{align*}
  Because $Z_0 = \onevec\onevec^T$, $\onevec^T \pth{\prod_{i \in [\ell]} PZ_{f(i)}} P \onevec$ is the product of terms of the form $\onevec^T(PZ_1)^jP\onevec = p_{01}(z-1)^jp_{11}^{j-1}p_{10}$.
  Each run of $j$ consecutive ones in $f$ contributes $j$ to $k_1(f)$ and $j-1$ to $k_{11}(f)$, which gives us $(a)$.

  The identity for $a_{\ell}^{\circ}$ follows analogously.
\end{proof}

\subsection{Inequalities}
\begin{lemma}
  \label{lemma:cycle-shorten}
  For all $\ell \geq 2$, $x \in \R$ and $y \geq 1$, $b_{\ell}^{\circ}(x,y) \leq b_2^{\circ}(x,y)^{\ell/2}$.
\end{lemma}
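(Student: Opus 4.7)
The plan is to recognize $b_\ell^\circ(x,y)$ as the trace of an $\ell$th matrix power, verify that this matrix has real eigenvalues under the hypothesis $y \geq 1$, and then finish with an elementary power-sum inequality. The argument is short once the matrix is set up correctly, and it parallels the transfer-matrix flavor already used implicitly in the definitions of $a_\ell^\circ$ and $a_\ell$ earlier in the appendix.

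First I would write $b_\ell^\circ(x,y) = \tr(M^\ell)$ where
\[
M = \begin{pmatrix} 1 & x \\ 1 & xy \end{pmatrix},
\]
by interpreting $b_\ell^\circ$ as a transfer matrix sum over cyclic binary sequences: indexing rows and columns by the bit value, the entry $M_{s,s'} = x^{s'} y^{ss'}$ correctly records an $x$-factor for each $1$ that appears and a $y$-factor for each consecutive pair of $1$s. Multiplying $\ell$ such factors around a cycle and summing over the state at each position gives exactly $\sum_{f \in \{0,1\}^\ell} x^{k_1(f)} y^{k_{11}^\circ(f)}$, which is $b_\ell^\circ(x,y)$. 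As a sanity check, $\tr(M^2) = 1 + 2x + x^2 y^2$, matching the direct computation from $f \in \{00,01,10,11\}$.

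Next I would show that $M$ has real eigenvalues whenever $y \geq 1$. Its characteristic polynomial is $\lambda^2 - (1+xy)\lambda + x(y-1)$, with discriminant
\[
(1+xy)^2 - 4x(y-1) = (xy-1)^2 + 4x.
\]
If $x \geq 0$ this is manifestly non-negative. If $x < 0$, substitute $t = \sqrt{|x|} \geq 0$ and observe $(xy-1)^2 + 4x = (1+y|x|)^2 - 4|x| = y^2 t^4 + 2yt^2 + 1 - 4t^2$; dividing through by the obvious factor, the non-negativity reduces to $yt^2 - 2t + 1 \geq 0$, a quadratic in $t$ whose discriminant is $4(1-y) \leq 0$, hence non-negative on all of $\R$. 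So both eigenvalues $\lambda_+,\lambda_-$ are real.

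Finally I would conclude by the chain
\[
\tr(M^\ell) = \lambda_+^\ell + \lambda_-^\ell \leq |\lambda_+|^\ell + |\lambda_-|^\ell = (\lambda_+^2)^{\ell/2} + (\lambda_-^2)^{\ell/2} \leq (\lambda_+^2 + \lambda_-^2)^{\ell/2} = \tr(M^2)^{\ell/2},
\]
where the middle step is trivial (and an equality for even $\ell$) and the last step applies the elementary bound $a^p + b^p \leq (a+b)^p$ for $a,b \geq 0$ and $p = \ell/2 \geq 1$. Note $\tr(M^2) = \lambda_+^2 + \lambda_-^2 \geq 0$, so the fractional power on the right is well-defined. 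The only non-routine step is the reality of the eigenvalues, and I expect this to be the main (though still minor) obstacle; everything else is transfer-matrix bookkeeping and a textbook convexity inequality.
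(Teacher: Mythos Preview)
Your proof is correct and follows essentially the same approach as the paper: express $b_\ell^\circ$ as the trace of the $\ell$th power of a $2\times 2$ transfer matrix (yours is the transpose of the paper's, which is immaterial for traces and eigenvalues), show the eigenvalues are real when $y\geq 1$, and finish with the $p$-norm/power-sum inequality $|\lambda_+|^\ell+|\lambda_-|^\ell\leq(\lambda_+^2+\lambda_-^2)^{\ell/2}$. The only cosmetic difference is in the reality-of-eigenvalues step: the paper treats the discriminant $y^2x^2+(4-2y)x+1$ directly as a quadratic in $x$ with discriminant $16-16y\leq 0$, whereas you split on the sign of $x$ and (for $x<0$) implicitly use the factorization $y^2t^4+(2y-4)t^2+1=(yt^2-2t+1)(yt^2+2t+1)$; both are fine, though ``dividing through by the obvious factor'' undersells what is happening.
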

\begin{proof}
  Let $B = \twobytwo{1}{1}{x}{xy}$ and observe that
  \[
    b_{\ell}^{\circ}(x,y) = \sum_{f \in [2]^{\ell}} \prod_{i \in [\ell]} B_{i, (i+1) \bmod \ell} = \tr\pth{B^{\ell}}.
  \]
  Let $\lambda_0$ and $\lambda_1$ be the eigenvalues of $B$.
  When
  \begin{equation*}
    (\lambda_0- \lambda_1)^2
    = \tr(B)^2 - 4 \det(B)
%   = (1+xy)^2 + 4(x-xy)
    = y^2x^2 + (4-2y)x + 1
    \geq 0,
  \end{equation*}
  the eigenvalues are real.
  The discriminant of this quadratic is $(4-2y)^2 - 4y^2 = 16 - 16y$, which is negative when $y \geq 1$.
  Thus when $y \geq 1$, the eigenvalues are real for all $x$.
  We have the Jordan decomposition $B = C^{-1} \Lambda C$ where
  $\Lambda$ is upper triangular and has diagonal entries $\lambda_0$ and $\lambda_1$.
  Then 
  \[
    \tr(B^{\ell}) = \tr(\Lambda^{\ell}) = \lambda_0^{\ell} + \lambda_1^{\ell} \leq |\lambda_0|^{\ell} + |\lambda_1|^{\ell} \leq   (\lambda_0^2 + \lambda_1^2)^{\ell/2}
  \]
  from the standard inequality between $p$-norms.
%With \eqref{gf-iden}, this gives the claimed inequality.
\end{proof}

\begin{lemma}
  \label{lemma:path-to-cycle}
  For all $\ell \geq 1$, all $x \geq 0$, and all $y \geq 1$, $b_{\ell}(x,y)^2 \leq b_2^{\circ}(x,y)^{\ell}$.
\end{lemma}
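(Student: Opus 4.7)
The plan is to reduce the inequality to Lemma~\ref{lemma:cycle-shorten} by observing that, when $y \geq 1$, the path generating function $b_\ell$ is dominated term-by-term by the cycle generating function $b_\ell^\circ$. For any $f \in \{0,1\}^\ell$ with $\ell \geq 2$, the cyclic count differs from the linear count only by whether the wrap-around pair $(f_{\ell-1}, f_0)$ is a $(1,1)$:
\[
k_{11}^\circ(f) = k_{11}(f) + \indset(f_0 = 1 \wedge f_{\ell-1} = 1) \geq k_{11}(f).
\]
Since $x \geq 0$ and $y \geq 1$, every summand $x^{k_1(f)} y^{k_{11}^\circ(f)}$ is nonnegative and at least as large as the corresponding summand $x^{k_1(f)} y^{k_{11}(f)}$. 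Thus $b_\ell(x,y) \leq b_\ell^\circ(x,y)$ for all $\ell \geq 2$.

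With this monotonicity in hand, the case $\ell \geq 2$ follows immediately by squaring and applying Lemma~\ref{lemma:cycle-shorten}:
\[
b_\ell(x,y)^2 \leq b_\ell^\circ(x,y)^2 \leq \pth{b_2^\circ(x,y)^{\ell/2}}^2 = b_2^\circ(x,y)^\ell.
\]

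The case $\ell = 1$ falls outside the scope of Lemma~\ref{lemma:cycle-shorten} and also outside the scope of the pointwise comparison above, since a length-one linear sequence has no adjacent pair while its circular counterpart does; I would handle it by direct computation. Explicitly, $b_1(x,y) = 1 + x$ (the $y$ variable drops out) and $b_2^\circ(x,y) = 1 + 2x + x^2 y^2$, so the desired inequality $b_1(x,y)^2 \leq b_2^\circ(x,y)$ reduces to $x^2 \leq x^2 y^2$, which holds whenever $x \geq 0$ and $y \geq 1$.

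The only conceptual point is recognizing that closing a path into a cycle can only add $(1,1)$-adjacencies, and that the hypothesis $y \geq 1$ turns this monotonicity into a pointwise inequality on generating functions; after that the rest is bookkeeping. I do not anticipate any real obstacle, since the heavy lifting (the spectral bound on $b_\ell^\circ$) has already been carried out in Lemma~\ref{lemma:cycle-shorten}.
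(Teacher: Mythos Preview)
Your proposal is correct and follows essentially the same route as the paper's proof: establish $b_\ell(x,y)\le b_\ell^\circ(x,y)$ for $\ell\ge 2$ from $k_{11}(f)\le k_{11}^\circ(f)$ and $y\ge 1$, then invoke Lemma~\ref{lemma:cycle-shorten}, and handle $\ell=1$ by the direct computation $(1+x)^2\le 1+2x+x^2y^2$. The only cosmetic difference is that the paper phrases the comparison coefficient-wise in $x$ rather than summand-wise over $f$.
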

\begin{proof}
  First, observe that for all $f \in \{0,1\}^{\ell}$, $k_{11}(f) \leq k_{11}^{\circ}(f)$.
  Thus for all $\ell \geq 2$, $j \in \N$, and $y \geq 1$, we have the stronger inequality $[x^k] b_{\ell}(x,y) \leq [x^j] b_{\ell}^{\circ}(x,y)$.
  Combining this with Lemma~\ref{lemma:cycle-shorten} gives the claim for $\ell \geq 2$.
  Finally, $b_1(x,y)^2 = (1+x)^2 \leq 1 + 2x + x^2y^2 = b_2^{\circ}(x,y)$.
\end{proof}

\begin{lemma}
  \label{lemma:a-ineqs}
  If $\frac{p_{11}p_{00}}{p_{10}p_{01}} \geq 1$, then for all $\ell \geq 1$, $a_{\ell}(z)^2 \leq a_2^{\circ}(z)^{\ell}$ and for all $\ell \geq 2$, $a_{\ell}^{\circ}(z)^2 \leq a_2^{\circ}(z)^{\ell}$.
\end{lemma}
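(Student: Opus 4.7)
The plan is to translate the bounds of Lemmas~\ref{lemma:cycle-shorten} and \ref{lemma:path-to-cycle} directly, using Lemma~\ref{lemma:a-to-b}. I will set $x = \pmar(z-1)$ and $y = p_{11}/\pmar$, so that $a_{\ell}(z) = b_{\ell}(x,y)$ and $a_{\ell}^\circ(z) = b_{\ell}^\circ(x,y)$ for all $\ell \geq 1$. The inequalities claimed about $a_\ell$ and $a_\ell^\circ$ then reduce to inequalities between $b_\ell(x,y)^2$, $b_\ell^\circ(x,y)^2$, and $b_2^\circ(x,y)^\ell$.

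The first step is to identify the regime in which $y \geq 1$, because Lemmas~\ref{lemma:cycle-shorten} and \ref{lemma:path-to-cycle} both require this. From $\pmar = (p_{10}+p_{11})(p_{01}+p_{11})$, we have $y \geq 1$ if and only if $p_{11} \geq (p_{10}+p_{11})(p_{01}+p_{11})$, which, using $p_{00} = 1 - p_{10} - p_{01} - p_{11}$, rearranges to $p_{11} p_{00} \geq p_{10} p_{01}$. This is exactly the positive-correlation hypothesis of the lemma.

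With $y \geq 1$ in hand, the cycle inequality $a_\ell^\circ(z)^2 \leq a_2^\circ(z)^\ell$ for $\ell \geq 2$ follows by squaring Lemma~\ref{lemma:cycle-shorten}. The squaring is legitimate because the proof of that lemma actually furnishes the two-sided bound $|b_\ell^\circ(x,y)| \leq b_2^\circ(x,y)^{\ell/2}$ (one applies the triangle inequality $|\lambda_0^\ell + \lambda_1^\ell| \leq |\lambda_0|^\ell + |\lambda_1|^\ell$ before bounding the right-hand side by $(\lambda_0^2 + \lambda_1^2)^{\ell/2}$), and $b_2^\circ(x,y) = \lambda_0^2 + \lambda_1^2 \geq 0$ since the eigenvalues of $B$ are real whenever $y \geq 1$. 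For the path inequality $a_\ell(z)^2 \leq a_2^\circ(z)^\ell$, I restrict to $z \geq 1$ so that $x \geq 0$, and then Lemma~\ref{lemma:path-to-cycle} gives the bound verbatim; $z \geq 1$ is precisely the regime of $\gfM(z)$ in which the Chernoff-style bound of Lemma~\ref{lemma:A-ub} will be used.

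The proof is essentially mechanical, so the main obstacles are only bookkeeping. The first is to verify that the positive-correlation hypothesis matches $y \geq 1$ on the nose, so that Lemmas~\ref{lemma:cycle-shorten} and \ref{lemma:path-to-cycle} apply without further adjustment. The second is to be careful about the possible sign of $b_\ell^\circ(x,y)$ for odd $\ell$ when squaring Lemma~\ref{lemma:cycle-shorten}, which forces a brief appeal to the intermediate step $|b_\ell^\circ| \leq b_2^{\circ\,\ell/2}$ that is proved but not stated in that lemma.
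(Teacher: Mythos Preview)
Your approach is essentially identical to the paper's: verify that the positive-correlation hypothesis is equivalent to $y = p_{11}/(\pmar) \geq 1$, then invoke Lemmas~\ref{lemma:a-to-b}, \ref{lemma:cycle-shorten}, and \ref{lemma:path-to-cycle}. Your extra care about the restriction $z \geq 1$ (needed for $x \geq 0$ in Lemma~\ref{lemma:path-to-cycle}) and about signs when squaring Lemma~\ref{lemma:cycle-shorten} is well-placed and in fact fills in details the paper's two-line proof glosses over; both points are harmless in the downstream application, where only $z \geq 1$ is used.
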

\begin{proof}
  We have $y = \frac{p_{11}}{\pmar} \geq 1$ if and only if $\frac{p_{11}p_{00}}{p_{10}p_{01}} \geq 1$.
  Then the claim follows from Lemmas~\ref{lemma:a-to-b}, \ref{lemma:cycle-shorten}, and \ref{lemma:path-to-cycle}.
\end{proof}

\begin{proof}[Proof of Lemma~\ref{lemma:A-ub}]
Let $n' = |\mu|$.
We have
\begin{align}
   \gfM(z)%\nonumber\\
  &\eql{a} \prod_{\ell \geq 1} a_{\ell}(z)^{t_{1,\ell}} a_{\ell}^{\circ}(z)^{t_{1,\ell}^{\circ}} a_{\ell}(z^2)^{t_{2,\ell}} a_{\ell}^{\circ}(z^2)^{t_{2,\ell}^{\circ}}\nonumber\\
  &\leql{b} a_2^{\circ}(z)^{t_{1,1}/2} a_1^{\circ}(z)^{t^{\circ}_{1,1}} a_2^{\circ}(z^2)^{t_{2,1}/2} a_1^{\circ}(z^2)^{t^{\circ}_{2,1}} \nonumber\\
  &\eq \cdot \prod_{\ell \geq 2} (a_2^{\circ}(z)^{\ell/2})^{t_{1,\ell}+t_{1,\ell}^{\circ}} (a_2^{\circ}(z^2)^{\ell/2})^{t_{2,\ell}+t_{2,\ell}^{\circ}}\nonumber\\
%  &\eql{c} a_1^{\circ}(z^2)^{t_{2,1}^{\circ}} a_2^{\circ}(z)^{\ttl_1 /2} a_2^{\circ}(z^2)^{\ttl_2 /2} \label{one-two-cycles}
  &\eql{c} a_1^{\circ}(z^2)^{t_{2,1}^{\circ}} a_2^{\circ}(z)^{d(n'-d)/2} a_2^{\circ}(z^2)^{\pth{\binom{d}{2} - t_{2,1}^{\circ}} /2} \label{one-two-cycles}
%  &\leql{d} a_1^{\circ}(z)^{t_{1,1}^{\circ}} a_1^{\circ}(z^2)^{t_{2,1}^{\circ}} \exp( 2\pmar(z^2 - 1) + p_{11}^2(z^2 - 1)^2))^{\ttl/2}\\
\end{align}
where $(a)$ follows from Lemma~\ref{lemma:cycle-path-decomp}, $(b)$ follows from Lemma~\ref{lemma:a-ineqs}, and $(c)$ uses the facts $t_{1,1}^{\circ} = 0$, $\sum_{\ell \geq 1} (t_{1,\ell} + t_{1,\ell}^{\circ}) = d(n'-d)$, and $\sum_{\ell \geq 1} (t_{2,\ell} + t_{2,\ell}^{\circ}) = \binom{d}{2}$.

We can easily compute each of the factors in \eqref{one-two-cycles}:
\begin{align*}
  a_1^{\circ}(z)
  &= \tr(PZ_0) + \tr(PZ_1)\\
  &= 1 + p_{11}(z-1)\\
  a_2^{\circ}(z)
  &= \tr(PZ_0PZ_0) + 2\tr(PZ_1PZ_0) + \tr(PZ_1PZ_1)\\    
  &= 1 + 2\pmar(z-1) + p_{11}^2(z-1)^2.
\end{align*}
Now we will bound each factor:
\begin{align*}
  a_1^{\circ}(z^2)
%  &= 1 + p_{11}(z^2-1)\\
  &\leq \exp(p_{11}(z^2-1))\\
  a_2^{\circ}(z^2)
%  &= 1 + 2\pmar(z^2 - 1) + p_{11}^2(z^2 - 1)^2\\
  & \leq \exp(2\pmar(z^2 - 1) + p_{11}^2(z^2 - 1)^2)\\
  a_2^{\circ}(z)^2
%  &= (1 + 2\pmar(z - 1) + p_{11}^2(z - 1)^2)^2\\
  &\leq \exp( 4\pmar(z - 1) + 2 p_{11}^2(z - 1)^2)\\
  &=    \exp( 4(\pmar-p_{11}^2)(z - 1) + 2 p_{11}^2(z^2 - 1))\\
  &\leql{a} \exp( 2(\pmar-p_{11}^2)(z^2 - 1) + p_{11}^2(z^4 - 1))\\
  &= \exp( 2\pmar(z^2 - 1) + p_{11}^2(z^2 - 1)^2).
\end{align*}
where $(a)$ uses the inequality $x^2 - 1 = (x-1)^2 + 2(x-1) \geq 2(x-1)$.
Combining these with \eqref{one-two-cycles} and 
\[
  \frac{d(n'-d)}{4} + \frac{d(d-1) - 2 t_{2,1}^{\circ}}{4} = \frac{d(n'-1)-2 t_{2,1}^{\circ}}{4}
\]
gives the claimed bound.
\end{proof}
\section{Proof of Lemma~\ref{lemma:quadratic}}
\label{app:quadratic}
\begin{proof}
The optimal choice of $z$ satisfies
\begin{align}
  0 &= (2q_2z + q_1)\exp(q_2(z^2-1) + q_1(z-1))z^{-\tau}\nonumber\\
    &- \tau\exp(q_2(z^2-1) + q_1(z-1))z^{-\tau-1}\nonumber\\
  0 &= 2q_2z^2 + q_1z - \tau\label{quad}
\end{align}
The equation \eqref{quad} has one positive root and one negative root.
The positive root is
\[
  z^* = \frac{-q_1 + \sqrt{q_1^2 + 8\tau q_2}}{4q_2} = \frac{2\tau}{q_1 + \sqrt{q_1^2 + 8 \tau q_2}}.
\]
%The second expression avoid the cancellation between the terms in the numerator.
Because $q_1 \leq \sqrt{q_1^2 + 8 \tau q_2}$, we have the bounds
\begin{equation}
  \frac{\tau}{\sqrt{q_1^2 + 8\tau q_2}} \leq z^* \leq \frac{\tau}{q_1}. \label{z-bounds}
\end{equation}
Starting with one of the factors from the left side of \eqref{gf-ub}, we have
\begin{equation*}
  \exp\pth{q_2(z^2-1) + q_1(z - 1)}
  = \exp\pth{\frac{q_1}{2}z + \frac{\tau}{2} - q_2 - q_1}
  \leq e^{\tau - q_2 - q_1} \leq e^{\tau}
\end{equation*}
where we used \eqref{quad} to eliminate the $q_2z^2$ term, applied the upper bound from \eqref{z-bounds}, and used $q_1 \geq 0$ and $q_2 \geq 0$.
From the lower bound in \eqref{z-bounds},
\[
z^{-2} \leq \frac{q_1^2}{\tau^2} + \frac{8q_2}{\tau} \leq \max\pth{\frac{2q_1^2}{\tau^2},\frac{16q_2}{\tau}}
\]
so $\exp(q_2(z^2-1) + q_1(z-1))z^{-\tau} \leq \zeta^{\tau}$.
\end{proof}

\section{Proof of Lemma~\ref{lemma:xi-ub}}
\label{app:xi-ub}
\begin{proof}
  For $\mu \in \mcM(\mu^*,d)$ and $z > 0$,
  \begin{align*}
    \Pr[\bfM_{\mu,\mu^*} \geq kd]
    &\leq z^{-kd}\gfM(z)\\
    &\leq (z^2)^{-\tau}\exp(q_1(z^2 - 1) + q_2(z^4 - 1))
  \end{align*}
  where we have used Lemma~\ref{lemma:A-ub} and
  \begin{equation*}
    q_2 = \frac{\ttl}{4} p_{11}^2 \quad
    q_1 = t_{2,1}^{\circ}p_{11} + \frac{\ttl}{2}(\pmar - p_{11}^2) \quad
    \tau = \frac{dk}{2}.
  \end{equation*}
  Applying Lemma~\ref{lemma:quadratic} we obtain $\Pr[\bfM_{\mu,\mu^*} \geq kd] \leq \zeta^{\tau}$, where
  \[
    \zeta = \max \bigg(\sqrt{2} e \frac{q_1}{\tau}, 4e\pth{\frac{q_2}{\tau}}^{\frac{1}{2}}\bigg).
  \]
  Thus $\frac{1}{d} \log \Pr[\bfM_{\mu,\mu^*} \geq kd] \leq \frac{k}{2} \log \zeta$.
  We have
  \begin{align*}
    q_2 &\leq \frac{dn}{4} p_{11}^2\\
    q_1 &\leql{a} \frac{d}{2}p_{11} + \frac{dn}{2}(\pmar - p_{11}^2)\\
    \tau &\geq \Omega(dnp_{11})\\
    \frac{q_2}{\tau} &\leq \mathcal{O}(p_{11})\\
    \frac{q_1}{\tau} &\leq \mathcal{O}\pth{\frac{1}{n} + \frac{\pmar - p_{11}^2}{p_{11}}}
  \end{align*}
  where $(a)$ uses the fact that $t_{2,1}^{\circ}$ is equal to the number of cycles of length four in $\mu + \mu^*$, so it is at most $d/2$.
  From condition \eqref{correlation} and $p_{00} \geq \Omega(1)$ we have
  \[
    \frac{\pmar - p_{11}^2}{p_{11}} = \frac{p_{01}p_{10}}{p_{11}} + p_{01} + p_{10} \leq n^{-\Omega(1)}.
  \]
  To handle the case where $\zeta$ is equal to the first entry of the maximum, we have
  \begin{align*}
    \frac{k}{2} \log\pth{\frac{\tau}{\sqrt{2}eq_1}}
    &\geq \Omega(n p_{11}) \log (n^{\Omega(1)})\\
    &\geq \omega(\log n).
  \end{align*}
  In the second case, when $\omega\pth{\frac{1}{n}} \leq p_{11} \leq n^{-\Omega(1)}$, we have
  \begin{align*}
    \frac{k}{2} \log\pth{\pth{\frac{\tau}{16e^2q_2}}^{\frac{1}{2}}}
    &\geq \Omega(n p_{11}) \log \pth{\frac{1-o(1)}{8 e^2 p_{11}}}\\
    &\geq \omega(1) \log (n^{\Omega(1)})\\
    &\geq \omega(\log n)
  \end{align*}

  The function $f(x) = - x \log \pth{8e^2x}$ is increasing on the interval $0 \leq x \leq \frac{1}{8e^3}$, so condition \eqref{sparsity} can replace $p_{11} \leq n^{-\Omega(1)}$.
\end{proof}

\bibliography{IEEEabrv,deanon}
\bibliographystyle{IEEEtran}

\end{document}